\author[1]{Lars Jaffke} %%% add authors
\author[1]{Paloma T.\ Lima}
\author[2]{Geevarghese Philip}
\affil[1]{University of Bergen, Bergen, Norway} %%% add affiliations
\affil[ ]{\texttt{\{lars.jaffke,paloma.lima\}@uib.no}} %%% add email addresses
\affil[2]{Chennai Mathematical Institute, Chennai, India and UMI ReLaX}
\affil[ ]{\texttt{gphilip@cmi.ac.in}}
\title{Structural Parameterizations of Clique Coloring\thanks{L.\ J.\ 
	is supported by the Trond Mohn Foundation (TMS).
	The work was partially done while L.\ J.\ and P.\ T.\ L.\ were visiting Chennai Mathematical Institute.}} %%% add title
\begin{document}

\maketitle

\begin{abstract}	
	A clique coloring of a graph is an assignment of colors to its vertices
	such that no maximal clique is monochromatic.
	We initiate the study of structural parameterizations of the \cliquecol problem
	which asks whether a given graph has a clique coloring with $q$ colors.
	For fixed $q \ge 2$, we give an $\Ostar(q^{\treewidth})$-time algorithm when the input 
	graph is given together with one of its tree decompositions of width $\treewidth$.
	We complement this result with a matching lower bound under the Strong Exponential Time Hypothesis.
	We furthermore show that (when the number of colors is unbounded)
	\cliquecol is \XP parameterized by clique-width.
\end{abstract}

%!TEX root=clique-col.tex
\section{Introduction}
Vertex coloring problems are central in algorithmic graph theory, and appear in many variants.
One of these is \cliquecol, which given a graph $G$ and an integer $k$ asks whether $G$ has a clique coloring with $k$ colors,
i.e.\ whether each vertex of $G$ 
can be assigned one of $k$ colors such that there is no monochromatic maximal clique.
The notion of a clique coloring of a graph was introduced in 1991 by Duffus et al.~\cite{DuffusEtAl1991},
and it behaves quite differently from the classical notion of a proper coloring, 
which forbids monochromatic edges.
Any proper coloring is a clique coloring, but not vice versa.
For instance, a complete graph on $n$ vertices only has a proper coloring with $n$ colors,
while it has a clique coloring with two colors.
Moreover, proper colorings are closed under taking subgraphs.
On the other hand, removing vertices or edges from a graph may introduce new maximal cliques, 
therefore a clique coloring of a graph 
is not always a clique coloring of its subgraphs, not even of its induced subgraphs.

Also from a complexity-theoretic perspective, \cliquecol behaves very differently from \graphcol.
Most notably, while it is easy to decide whether a graph has a proper coloring with two colors,
Bacs\'{o} et al.~\cite{BacsoEtAl2004} showed that it is already \coNP-hard to decide 
if a given coloring with two colors is a clique coloring.
Marx~\cite{Marx2011} later proved \cliquecol to be $\Sigma_2^p$-complete for every fixed number of (at least two) colors.

On the positive side, Cochefert and Kratsch showed that the \cliquecol problem can be solved in $\Ostar(2^n)$ 
time,\footnote{The $\Ostar$-notation suppresses polynomial factors in the input size, 
i.e.\ for inputs of size $n$, we have that $\Ostar(f(n)) = \calO(f(n)\cdot n^{\calO(1)})$.}
and the problem has been shown to be polynomial-time solvable on several graph classes.
Mohar and Skrekovski~\cite{MS1999} showed that all planar graphs are $3$-clique colorable,
and Kratochv\'{i}l and Tuza gave an algorithm that decides whether a 
given planar graph is $2$-clique colorable~\cite{KT2002}.
For several graph classes it has been shown that all their members except odd cycles on at least five vertices 
(which require three colors)
are $2$-clique colorable~\cite{BacsoEtAl2004,BacsoTuza2009,CamposEtAl2008,CerioliKorenchendler2009,Defossez2006,KleinMorgana2012,Penev2016,ShanEtAl2014}.
Therefore, on these classes \cliquecol is polynomial-time solvable.
Duffus et al.~\cite{DuffusEtAl1991} even conjectured in 1991 that perfect graphs are $3$-clique colorable, 
which was supported by many subclasses of perfect graphs being shown to be 
$2$- or $3$-clique colorable~\cite{AndreaeEtAl1991,BacsoEtAl2004,ChudnovskyLo2017,Defossez2006,DuffusEtAl1991,MS1999,Penev2016}.
However, in 2016, Charbit et al.~\cite{CharbitEtAl2016} showed that there are 
perfect graphs whose clique colorings require an unbounded number of colors.

In this work, we consider \cliquecol from the viewpoint of 
parameterized algorithms and complexity~\cite{CyganEtAl2015,DowneyFellows2013}.
In particular, we consider structural parameterizations of \cliquecol by two of the most commonly used
decomposition-based width measures of graphs, 
namely \emph{treewidth} and \emph{clique-width}.
Informally speaking, the treewidth of a graph $G$ measures how close $G$ is to being a forest.
On dense graphs, the treewidth is unbounded, and
clique-width can be viewed as an extension of treewidth that remains bounded on several
simply structured dense graphs.

Our first main result is a fixed-parameter tractable algorithm for \qcliquecol
parameterized by treewidth.  More precisely: we show that for any fixed
$q \ge 2$, \qcliquecol (asking for a clique coloring with $q$ colors) can be
solved in time $\Ostar(q^{\treewidth})$, where $\treewidth$ denotes the width of
a given tree decomposition of the input graph. We also show that this running
time is likely the best possible in this parameterization; we prove that under
the Strong Exponential Time Hypothesis (\SETH), for any~$q \ge 2$, there is
no~$\epsilon > 0$ such that \qcliquecol can be solved in time
$\Ostar((q-\epsilon)^\treewidth)$.  In fact, we rule out
$\Ostar((q-\epsilon)^t)$-time algorithms for a much smaller class of graphs than
those of treewidth $t$, namely: graphs that have both \emph{pathwidth} and
\emph{feedback vertex set number} simultaneously bounded by $t$.

Our second main result is an $\XP$ algorithm for \cliquecol with clique-width
as the parameter. The algorithm runs in time~$n^{f(w)}$, where \(w\) is the
clique-width $w$ of a given clique decomposition of the input $n$-vertex graph
and $f(w) = 2^{2^{\calO(w)}}$.  
The double-exponential dependence on $w$ in the degree of the polynomial stems
from the notorious property of clique colorings which we mentioned above;
namely, that taking induced subgraphs does not necessarily preserve clique colorings.  
This results in a large amount of information that needs to be
carried along as the algorithm progresses.

The paper is organized as follows.
In Section~\ref{sec:preliminaries}, we give introduce the basic concepts
that are important in this work;
in Section~\ref{sec:treewidth} we give the results on \qcliquecol 
parameterized by treewidth, 
and in Section~\ref{sec:cliquewidth} we give the algorithm for
\cliquecol parameterized by clique-width.

%!TEX root=clique-col.tex
\section{Preliminaries}\label{sec:preliminaries}
\paragraph*{Graphs.}
All graphs considered here are simple and finite. 
For a graph $G$ we denote by $V(G)$ and $E(G)$ the vertex set and edge set of $G$, respectively.
For an edge $e = uv \in E(G)$, we call $u$ and $v$ the \emph{endpoints} of $e$ and we write $u \in e$ and $v \in e$. 

For two graphs $G$ and $H$, we say that $G$ is a \emph{subgraph} of $H$, written $G \subseteq H$,
if $V(G) \subseteq V(H)$ and $E(G) \subseteq E(H)$.
For a set of vertices $S \subseteq V(G)$, the \emph{subgraph of $G$ induced by $S$}
is $G[S] \defeq (S, \{uv \in E(G) \mid u, v \in S\})$.

For a graph $H$, we say that a graph $G$ is \emph{$H$-free} 
if $G$ does not contain $H$ as an induced subgraph.
For a set of graphs $\calH$, we say that $G$ is \emph{$\calH$-free}
if $G$ is $H$-free for all $H \in \calH$.

For a graph $G$ and a vertex $v \in V(G)$, the set of its \emph{neighbors} is 
$N_G(v) \defeq \{u \in V(G) \mid uv \in E(G)\}$.
Two vertices $u, v \in V(G)$ are called \emph{false twins} if $N_G(u) = N_G(v)$. 
We say that a vertex $v$ is \emph{complete} to a set $X \subseteq V(G)$ if $X \subseteq N_G(v)$.
The \emph{degree} of $v$ is $\deg_G(v) \defeq \card{N_G(v)}$.
The \emph{closed neighborhood} of $v$ is $N_G[v] \defeq \{v\} \cup N_G(v)$.
For a set $X \subseteq V(G)$, we let $N_G(X) \defeq \bigcup_{v \in X} N_G(v) \setminus X$
and $N_G[X] \defeq X \cup N_G(X)$.
In all these cases, we may drop $G$ as a subscript if it is clear from the context.
A graph is called \emph{subcubic} if all its vertices have degree at most three.

A graph $G$ is \emph{connected} if for all $2$-partitions $(X, Y)$ of $V(G)$ 
with $X \neq \emptyset$ and $Y \neq \emptyset$,
there is a pair $x \in X$, $y \in Y$ such that $xy \in E(G)$.
A \emph{connected component} of a graph is a maximal connected subgraph.
A connected graph is called a \emph{cycle} if all its vertices have degree two.
A graph that does not contain a cycle as a subgraph is called a \emph{forest}
and a connected forest is a \emph{tree}.
In a tree $T$, the vertices of degree one are called the \emph{leaves} of $T$,
denoted by $\leaves(T)$, and the vertices in $V(T) \setminus \leaves(T)$ are 
the \emph{internal vertices} of $T$.
A tree of maximum degree two is a \emph{path} and the leaves of 
a path are called its \emph{endpoints}.
A tree $T$ is called a \emph{caterpillar} if it contains a path $P \subseteq T$ such that
all vertices in $V(T) \setminus V(P)$ are adjacent to a vertex in $P$.
A forest is called a \emph{linear forest} if all its components are paths
and a \emph{caterpillar forest} if all its components are caterpillars.

A tree $T$ is called \emph{rooted}, if there is a distinguished vertex $r \in V(T)$,
called the \emph{root} of $T$, inducing an ancestral relation on $V(T)$:
for a vertex $v \in V(T)$, if $v \neq r$, the neighbor of $v$ on the path 
from $v$ to $r$ is called the \emph{parent} of $v$, and all other neighbors of $v$ 
are called its \emph{children}.
For a vertex $v \in V(T) \setminus \{r\}$ with parent $p$, 
the \emph{subtree rooted at $v$}, denoted by $T_v$, 
is the subgraph of $T$ induced by all vertices 
that are in the same connected component of 
$(V(T), E(T) \setminus \{vp\})$ as $v$. 
We define $T_r \defeq T$.

A set of vertices $S \subseteq V(G)$ of a graph $G$ is called an \emph{independent set}
if $E(G[S]) = \emptyset$.
A set of vertices $S \subseteq V(G)$ is a \emph{vertex cover} in $G$ if $V(G) \setminus S$ 
is an independent set in $G$.
A graph $G$ is called \emph{complete} if $E(G) = \{uv \mid u, v \in V(G)\}$.
A set of vertices $S \subseteq V(G)$ is a \emph{clique} in $G[S]$ is complete.
A complete graph on three vertices is called a \emph{triangle}.

A graph $G$ is called \emph{bipartite} if its vertex set can be partitioned into two
nonempty independent sets, which we will refer to as a \emph{bipartition} of $G$.

\paragraph*{Notation for Equivalence Relations.}
Let $\Omega$ be a set and $\sim$ an equivalence relation over $\Omega$.
For an element $x \in \Omega$ the \emph{equivalence class of $x$}, denoted by $[x]$,
is the set $\{y \in \Omega \mid x \sim y\}$.
We denote the set of all equivalence classes of $\sim$ by $\Omega/\sim$.

\paragraph*{Parameterized Complexity.} 
We give the basic definitions of parameterized complexity that are relevant to this work 
and refer to~\cite{CyganEtAl2015,DowneyFellows2013} for details.
Let $\Sigma$ be an alphabet. A \emph{parameterized problem} is a set $\Pi \subseteq \Sigma^* \times \bN$,
the second component being the \emph{parameter} which usually expresses a structural measure of the input.
A parameterized problem $\Pi$ is said to be \emph{fixed-parameter tractable}, or in the complexity class \FPT,
if there is an algorithm that for any $(x, k) \in \Sigma^* \times \bN$ 
correctly decides whether or not $(x, k) \in \Pi$, and runs in time $f(k) \cdot \card{x}^c$
for some computable function $f \colon \bN \to \bN$ and constant $c$.
We say that a parameterized problem is in the complexity class \XP, 
if there is an algorithm that for each $(x, k) \in \Sigma^* \times \bN$ correctly decides whether or not $(x, k) \in \Pi$,
and runs in time $f(k) \cdot \card{x}^{g(k)}$, for some computable functions $f$ and $g$.

The concept analogous to \NP-hardness in parameterized complexity is that of \W[1]-hardness, 
whose formal definition we omit.
The basic assumption is that $\FPT \neq \W[1]$, under which no \W[1]-hard problem admits an \FPT-algorithm.
For more details, see~\cite{CyganEtAl2015,DowneyFellows2013}.

\paragraph*{Strong Exponential Time Hypothesis.}
In 2001, Impagliazzo et al.\ conjectured that a brute force algorithm to solve the \textsc{$q$-SAT} problem
which given a CNF-formula with clauses of size at most $q$,
asks whether it has a satisfying assignment, is `essentially optimal.'
This conjecture is called the \emph{Strong Exponential Time Hypothesis},
and can be formally stated as follows.
(For a survey of conditional lower bounds based on \SETH and related conjectures, see~\cite{Vas15}.)
\begin{conjecture-nn}[$\SETH$, Impagliazzo et al.~\cite{ImpagliazzoPaturi2001,ImpagliazzoPaturiZane2001}]
	For every $\epsilon > 0$, there is a $q \in \calO(1)$ such that \textsc{$q$-SAT} on $n$
	variables cannot be solved in time $\Ostar((2-\epsilon)^n)$.
\end{conjecture-nn}

\subsection{Treewidth}
We now define the treewidth and pathwidth of a graph,
and later the notion of a nice tree decomposition that we will use later in this work.
\begin{definition}[Treewidth, Pathwidth]\label{def:tw}
	Let $G$ be a graph. A \emph{tree decomposition} of $G$ is a pair $(T, \calB)$ of a tree $T$ and
	an indexed family of vertex subsets 
	$\calB = \{B_t \subseteq V(G)\}_{t \in V(T)}$, called \emph{bags},
	satisfying the following properties.
	\begin{enumerate}[label={(T\arabic*)}, ref={T\arabic*}]
		\item\label{def:tw:vertex:coverage} $\bigcup_{t \in V(T)} B_t = V(G)$.
		\item\label{def:tw:edge:coverage} For each $uv \in E(G)$ there exists some $t \in V(T)$ such that $\{u, v\} \subseteq B_t$.
		\item\label{def:tw:subtrees} For each $v \in V(G)$, let $U_v \defeq \{t \in V(T) \mid v \in B_t\}$ be the nodes in $T$
			whose bags contain $v$. Then, $T[U_v]$ is connected.
	\end{enumerate}
	The \emph{width} of $(T, \calB)$ is $\max_{t \in V(T)} \card{B_t} - 1$, 
	and the \emph{tree-width} of a graph is the minimum width over all its tree decompositions.
	If $T$ is a path, then $(T, \calB)$ is called a \emph{path decomposition}, 
	and the \emph{path-width} of a graph is the minimum width over all its path decompositions.
\end{definition}
The following notion of a \emph{nice} tree decomposition 
allows for streamlining the description of dynamic programming algorithms over tree decompositions.
\begin{definition}[Nice Tree Decomposition]
	Let $G$ be a graph and $(T, \calB)$ a tree decomposition of $G$.
	Then, $(T, \calB)$ is called a \emph{nice tree decomposition}, if $T$ is rooted 
	and each node is of one of the following types.
	\begin{description}
		\item[Leaf.] A node $t \in V(T)$ is a \emph{leaf node}, if $t$ is a leaf of $T$ and $B_t = \emptyset$.
		\item[Introduce.] A node $t \in V(T)$ is an \emph{introduce node} if it has precisely one child $s$,
			and there is a unique vertex $v \in V(G) \setminus B_s$ 
			such that $B_t = B_s \cup \{v\}$. 
			In this case we say that \emph{$v$ is introduced at $t$}.
		\item[Forget.] A node $t \in V(T)$ is a \emph{forget node}, if it has precisely one child $s$,
			and there is a unique vertex $v \in B_s$ such that $B_t = B_s \setminus \{v\}$.
			In this case we say that \emph{$v$ is forgotten at $t$}.
		\item[Join.] A node $t \in V(T)$ is a \emph{join node}, if it has precisely two children $s_1$ and $s_2$, 
			and $B_t = B_{s_1} = B_{s_2}$.
	\end{description}
\end{definition}

It is known that any tree decomposition of a graph can be transformed in linear time
into a nice tree decomposition of the same width, with a relatively small number of bags.
\begin{lemma}[Kloks~\cite{Klo94}]\label{lem:nice:td}
	Let $G$ be a graph on $n$ vertices, and let $k$ be a positive integer.
	Any width-$k$ tree decomposition $(T, \calX)$ of $G$ of can be transformed in time $\calO(k \cdot \card{V(T)})$
	into a nice tree decomposition $(T', \calX')$ of width $k$ such that $\card{V(T')} = \calO(k\cdot n)$.
\end{lemma}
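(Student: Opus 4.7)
The plan is a three-stage constructive transformation of $(T, \calX)$ into a nice tree decomposition of the same width.

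\textbf{Stage 1: shrinking $T$ to $\calO(n)$ nodes.} I would first root $T$ at an arbitrary vertex, and then repeatedly contract any edge $st$ with $B_s \subseteq B_t$, i.e., delete $s$ and reattach its other neighbors to $t$. Each such contraction preserves all three axioms of Definition~\ref{def:tw} and does not increase the width. When no such edge remains, a counting argument gives $\card{V(T)} \le n + 1$: in the rooted tree, each $v \in V(G)$ has a unique topmost node containing it, and every non-root node must be the topmost bag of at least one vertex---else its bag would be contained in its parent's bag, contradicting the termination condition.

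\textbf{Stage 2: binarizing.} I would then bring the rooted tree to maximum degree three by replacing every node $t$ with $d \ge 3$ children by a balanced binary tree of $d-1$ join-type copies, all carrying the bag $B_t$. This multiplies the number of nodes by only a constant factor, so the tree still has $\calO(n)$ nodes.

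\textbf{Stage 3: interpolating bags and cleaning boundaries.} For each remaining parent--child edge $(t, s)$ with $B_t \neq B_s$, I would replace it by a chain of length at most $2(k+1)$ that interpolates from $B_t$ to $B_s$ one vertex at a time: first a forget node for each vertex of $B_t \setminus B_s$, then an introduce node for each vertex of $B_s \setminus B_t$. Finally, below each current leaf $t$ I would hang a chain of $\card{B_t}$ introduce nodes ending in a fresh empty-bag leaf, and (if desired) prepend forget nodes above the root to make its bag empty. A routine verification confirms that every node then matches exactly one of the four prescribed types, and that the width is still $k$.

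\textbf{Size, time, and main obstacle.} Since each of the $\calO(n)$ nodes surviving Stage~1 receives incident chains of total length $\calO(k)$, the final decomposition has $\card{V(T')} = \calO(k \cdot n)$. Representing each bag as a sorted list of size at most $k+1$ makes every contraction, bag comparison, and chain insertion cost $\calO(k)$ time, yielding the claimed $\calO(k \cdot \card{V(T)})$ overall running time. The main obstacle is Stage~1: one must verify carefully that contracting $st$ with $B_s \subseteq B_t$ preserves the connectedness of $T[U_v]$ for every $v \in V(G)$ when $s$ has several neighbors other than $t$, and that the counting argument for $\card{V(T)} \le n + 1$ goes through under the rooting used. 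Stages~2 and~3 are then essentially bookkeeping.
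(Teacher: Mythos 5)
The paper does not actually prove this lemma --- it is invoked as a black box with a citation to Kloks --- so there is no internal proof to compare against. Your construction is the standard one (and essentially the one in Kloks' book): contract containment edges to get $\calO(n)$ nodes, binarize, then interpolate between adjacent bags with unit-change chains and pad the leaves/root; your Stage~1 concerns (connectivity of $T[U_v]$ under contraction, uniqueness of topmost bags for the counting argument) are exactly the right points to check, and both go through.

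Three small repairs are needed before the ``routine verification'' claim holds. First, with the paper's orientation of nice tree decompositions (a forget node $t$ with child $s$ has $B_t = B_s \setminus \{v\}$, an introduce node has $B_t = B_s \cup \{v\}$), the chain between parent $t$ and child $s$ read bottom-up must \emph{forget} the vertices of $B_s \setminus B_t$ and \emph{introduce} those of $B_t \setminus B_s$; you have the two sets swapped. Second, the paper's join nodes require $B_t = B_{s_1} = B_{s_2}$, so wherever a (binarized) node keeps two children you must cap each inserted chain with an extra node carrying the bag $B_t$ itself; as literally written, a bottom-level copy from Stage~2 would have one child with bag $B_t$ and one with a different bag, which matches none of the four node types. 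Third, the running-time claim for Stage~1 needs a slightly more careful implementation than ``repeat until no contractible edge,'' since reattaching a contracted node's neighbours forces re-testing containments; rooting $T$ and testing/contracting only child--parent pairs in a single bottom-up pass keeps the total work at $\calO(k \cdot \card{V(T)})$. None of these affects the overall architecture, which is sound and yields the claimed width, $\calO(k\cdot n)$ size, and $\calO(k \cdot \card{V(T)})$ time.
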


\subsection{Clique-Width, Branch Decompositions, and Module-Width}
We first define clique-width, introduced by Courcelle and Olariu~\cite{CourcelleOlariu2000}, 
and then the equivalent measure of \emph{module-width} that we will use in our algorithm.
We keep the definition of clique-width slightly informal 
and refer to~\cite{CourcelleOlariu2000} for more details.

Let $G$ be a graph. The \emph{clique-width} of $G$, denoted by $\cliquewidth(G)$,
is the minimum number of labels $\{1, \ldots, k\}$ 
needed to obtain $G$ using the following four operations:
\begin{enumerate}[label={\arabic*.}]
	\item Create a new graph consisting of a single vertex labeled $i$.
	\item Take the disjoint union of two labeled graphs $G_1$ and $G_2$.
	\item Add all edges between pairs of vertices of label $i$ and label $j$.
	\item Relabel every vertex labeled $i$ to label $j$.
\end{enumerate}

We now turn to the definition of module-width which is based on 
the notion of a rooted branch decomposition.
\begin{definition}[Branch decomposition]
	Let $G$ be a graph.
	A \emph{branch decomposition} of $G$ is a pair $(T, \decf)$ of a subcubic tree $T$ and a bijection $\decf \colon V(G) \to L(T)$.
	If $T$ is a caterpillar, then $(T, \decf)$ is called a \emph{linear branch decomposition}.
	If $T$ is rooted, then we call $(T, \decf)$ a \emph{rooted branch decomposition}.
	In this case, for $t \in V(T)$, we denote by $T_t$ the subtree of $T$ rooted at $t$, 
	and we define $V_t \defeq \{v \in V(G) \mid \decf(v) \in \leaves(T_t)\}$,
	$\overline{V_t} \defeq V(G) \setminus V_t$, and
	$G_t \defeq G[V_t]$.
\end{definition}

Module-width is attributed to Rao~\cite{Rao2006,Rao2008}.
On a high level, the module-width of a rooted branch decomposition bounds, at each of its nodes $t$,
the maximum number of subsets of $\overline{V_t}$ that make up the intersection of $\overline{V_t}$ 
with the neighborhood of some vertex in $V_t$.
\begin{definition}[Module-width]
	Let $G$ be a graph, and $(T, \decf)$ be a rooted branch decomposition of $G$. 
	For each $t \in V(T)$, let $\sim_t$ be the equivalence relation on $V_t$ defined as follows:
	\begin{align*}
		\forall u, v \in V_t \colon u \sim_t v \Leftrightarrow N_G(u) \cap \overline{V_t} = N_G(v) \cap \overline{V_t}
	\end{align*}
	
	The \emph{module-width} of $(T, \decf)$ is $\mw(T, \decf) \defeq \max_{t \in V(T)} \card{V_t/{\sim_t}}$.
%	, where $V_t/{\sim_t}$ denotes the set of equivalence classes of $\sim_t$.	
	%
	The \emph{module-width of $G$}, denoted by $\modulew(G)$, 
	is the minimum module width over all rooted branch decompositions of $G$.
\end{definition}

We introduce some notation.  For a node $t \in V(T)$ and a set
$S \subseteq V(G_t)$, we let \(\eqc_t(S)\) be the set of all equivalence classes
of \(\sim_{t}\) which have a nonempty intersection with \(S\), and
\(\eqcbar_t(S)\) be the remaining equivalence classes of \(\sim_{t}\). Formally,
$\eqc_t(S) \defeq \{Q \in V_t/{\sim_t} \mid Q \cap S \neq \emptyset\}$ and
$\eqcbar_t(S) \defeq V_t/{\sim_t} \setminus \eqc_t(S)$.
Moreover, for a set of equivalence classes $\calQ \subseteq V_t/{\sim_t}$, we let $V(\calQ) \defeq \bigcup_{Q \in \calQ} Q$.

Let $(T, \decf)$ be a rooted branch decomposition of a graph $G$ and let $t \in V(T)$ be a node with children $r$ and $s$.
We now describe an operator associated with $t$ that tells us how the graph $G_t$ is formed from its subgraphs $G_r$ and $G_s$,
and how the equivalence classes of $\sim_t$ are formed from the equivalence classes of $\sim_r$ and $\sim_s$.
Concretely, we associate with $t$ a bipartite graph $\decaux_t$ on bipartition $(V_r/{\sim_r}, V_s/{\sim_s})$ such that:
\begin{enumerate}
	\item $E(G_t) = E(G_r) \cup E(G_s) \cup F$, where $F = \{uv \mid u \in V_r, v \in V_s, \{[u], [v]\} \in E(\decaux_t)\}$, and
	\item there is a partition $\calP = \{P_1, \ldots, P_h\}$ of $V(\decaux_t)$ such that $V_t/{\sim_t} = \{Q_1, \ldots, Q_h\}$,
		where for $1 \le i \le h$, $Q_i = \bigcup_{Q \in P_i} Q$.
		For each $1 \le i \le h$, we call $P_i$ the \emph{bubble} of the resulting equivalence class $\bigcup_{Q \in P_i} Q$ of $\sim_t$.
\end{enumerate}

As auxiliary structures, for $p \in \{r, s\}$, we let
$\bubblemap_p \colon V_p/{\sim_p} \to V_t/{\sim_t}$ be the map such that for all
$Q_p \in V_p/{\sim_p}$, $Q_p \subseteq \bubblemap_p(Q_p)$, i.e.\
$\bubblemap_p(Q_p)$ is the equivalence class of $\sim_t$ whose bubble contains
$Q_p$.
We call $(\decaux_t, \bubblemap_r, \bubblemap_s)$ the \emph{operator} of $t$.

\begin{theorem}[Rao, Thm.~6.6 in~\cite{Rao2006}]\label{thm:cw:mw}
	For any graph $G$, $\mw(G) \le \cw(G) \le 2 \cdot \mw(G)$,
	and given a decomposition of bounded clique-width, a decomposition of bounded module-width,
	and vice versa,
	can be constructed in time $\calO(n^2)$, where $n = \card{V(G)}$.
\end{theorem}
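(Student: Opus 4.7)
The plan is to prove the two inequalities by explicit, algorithmic constructions, which then simultaneously yield the claimed $\calO(n^2)$ conversion procedures.

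\textbf{Direction $\mw(G) \le \cw(G)$.}
Start from a clique-width expression for $G$ using $k = \cw(G)$ labels, regarded as its syntax tree: leaves are single-vertex-creation operations, binary internal nodes are disjoint unions, and unary internal nodes are relabel or add-all-edges operations. First I would contract each maximal path of unary nodes into its nearest binary descendant, obtaining a subcubic rooted tree $T$; then I would define $\decf$ as the obvious bijection from $V(G)$ to $\leaves(T)$. For each $t \in V(T)$, the partially built labeled graph at the corresponding position of the expression assigns to each vertex of $V_t$ some label in $\{1,\ldots,k\}$. The key claim is that this labeling refines $\sim_t$: two vertices of $V_t$ carrying the same label at $t$ receive, in every subsequent relabel and edge-addition on the path up to the root, exactly the same treatment, and therefore end up with identical neighborhoods in $\overline{V_t}$. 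Hence $\card{V_t/{\sim_t}} \le k$ at every node.

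\textbf{Direction $\cw(G) \le 2\cdot\mw(G)$.}
Let $(T,\decf)$ be a rooted branch decomposition of $G$ with $\mw(T,\decf)=k$. I would build a clique-width expression by processing $T$ bottom-up, maintaining the invariant that at node $t$ the vertices of $V_t$ are labeled in $\{1,\ldots,k\}$ such that two vertices share a label exactly when they are $\sim_t$-equivalent. At a leaf, a single vertex is introduced with label $1$. At an internal node $t$ with children $r,s$ and operator $(\decaux_t, \bubblemap_r, \bubblemap_s)$, I would (i)~on the subexpression coming from $s$, relabel the labels $\{1,\ldots,k\}$ to fresh labels $\{k+1,\ldots,2k\}$ so that they cannot collide with those inherited from $r$; (ii)~take the disjoint union of the two subexpressions; (iii)~for every edge $\{Q_r,Q_s\}\in E(\decaux_t)$ apply the add-all-edges operation between the label of $Q_r$ and the (shifted) label of $Q_s$; (iv)~for each bubble $P_i$ of $\decaux_t$, repeatedly relabel the labels appearing in $P_i$ to a single common representative. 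After step~(iv) the labels used within $V_t$ biject with $V_t/{\sim_t}$, which re-establishes the invariant; at most $2k$ labels are active at any moment.

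\textbf{Main subtlety.}
The step that requires care is (iii)--(iv): I must verify that the edges added are exactly $F = \{uv : u\in V_r, v\in V_s, \{[u],[v]\}\in E(\decaux_t)\}$ (so no spurious edges are introduced between $V_r$-vertices of the same $\sim_r$-class, or across classes not listed in $\decaux_t$), and that the bubble-merging in (iv) faithfully realises the partition $\calP$ defining $V_t/{\sim_t}$. Both follow from the inductive invariant, which guarantees that at the moment of combining, each $\sim_r$-class and each $\sim_s$-class occupies its own label, so the add-edge and relabel operations act on exactly the right vertex sets. This is where the factor of two in $2\mw(G)$ is unavoidable: without disjoint label blocks for the two sides the add-edge operations could not selectively join one $\sim_r$-class with one $\sim_s$-class.

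\textbf{Running time.}
In both directions the decomposition tree has $\calO(n)$ internal nodes, and at each node the constructed expression adds $\calO(k^2)$ elementary operations (one for every potential edge of $\decaux_t$ plus one per bubble), with $k \le n$; summing yields an expression of size $\calO(n\cdot k^2) = \calO(n^3)$ elementary operations, but with the standard bookkeeping that shares relabel blocks across nodes (tracking only the $\calO(k)$ labels currently alive) the construction is implementable in $\calO(n^2)$ time, matching the stated bound.
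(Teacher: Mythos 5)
This statement is imported verbatim from Rao~\cite{Rao2006}; the paper gives no proof of it, so there is no in-paper argument to compare against. Your two constructions are the standard ones and are essentially correct. For $\mw(G) \le \cw(G)$: contracting the unary (relabel/add-edge) chains of the syntax tree does give a rooted subcubic tree whose leaves biject with $V(G)$, and your key claim is sound -- two vertices of $V_t$ with the same label at $t$ keep equal labels at every ancestor and hence receive identical edges to $\overline{V_t}$, so the label classes refine $\sim_t$ and $\card{V_t/{\sim_t}} \le k$; note also that this direction needs no computation beyond restructuring the syntax tree, since the output is just the pair $(T,\decf)$. For $\cw(G) \le 2\mw(G)$: the invariant with disjoint label blocks $\{1,\ldots,k\}$ and $\{k+1,\ldots,2k\}$ correctly realizes exactly the edge set $F$ prescribed by $\decaux_t$ and the bubble partition; one small point you leave implicit is that in step~(iv) the representative labels must be chosen back inside $\{1,\ldots,k\}$ (possible since $\card{V_t/{\sim_t}} \le k$), which is what restores the invariant for the parent.

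The only genuinely loose step is the running time. Your per-node count $\calO(k^2)$ summed over $\calO(n)$ nodes gives $\calO(nk^2)$, and the appeal to ``standard bookkeeping'' does not by itself bring this down to $\calO(n^2)$. The clean fix is a charging argument: every edge $\{Q_r,Q_s\} \in E(\decaux_t)$ joins two nonempty classes, so the corresponding add-edge operation creates at least one edge of $G$, and each edge of $G$ is created at exactly one node of $T$ (the least common ancestor of its endpoints' leaves); hence the total number of add-edge operations over the whole expression is at most $\card{E(G)} \le n^2$, while relabel operations number $\calO(k)$ per node, i.e.\ $\calO(nk) \subseteq \calO(n^2)$ in total. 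Together with computing the equivalence classes and operators of the branch decomposition (doable in $\calO(n^2)$ overall by partition refinement of the neighborhoods), this yields the stated $\calO(n^2)$ bound.
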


\subsection{Colorings}
Let $G$ be a graph.  An ordered partition $\calC = (C_1, \ldots, C_k)$ of $V(G)$
is called a \emph{coloring} of $G$ with $k$ colors, or a \emph{\(k\)-coloring}
of \(G\).  (Observe that for $i \in \{1, \ldots, k\}$, $C_i$ may be empty.)  For
$i \in \{1, \ldots, k\}$, we call $C_i$ the \emph{color class $i$}, and say that
the vertices in $C_i$ \emph{have color $i$}.  $\calC$ is called \emph{proper} if
for all $i \in \{1, \ldots, k\}$, $C_i$ is an independent set in $G$.

A coloring $\calC = (C_1, \ldots, C_k)$ of a graph $G$ 
is called a \emph{clique coloring (with $k$ colors)} if there
is no monochromatic maximal clique, i.e.\ no maximal clique $X$ in $G$ such that $X \subseteq C_i$ for some $i$.
In this work, we study the following computational problems.
\fancyproblemdef
	{\cliquecol}
	{Graph $G$, integer $k$}
	{Does $G$ have a clique coloring with $k$ colors?}
\vspace{-2em}
\fancyproblemdef
	{\qcliquecol for $q \ge 2$}
	{Graph $G$}
	{Does $G$ have a clique coloring with $q$ colors?}
	
The \textsc{$q$-Coloring} and \textsc{$q$-List Coloring} problems also make an appearance.
In the former, we are given a graph $G$ and the question is whether $G$ has a proper coloring with $q$ colors.
In the latter, we are additionally given a list $L(V) \subseteq \{1, \ldots, q\}$ for each vertex $v \in V(G)$,
and additionally require the color of each vertex to be from its list.

Whenever convenient, we alternatively denote a coloring of a graph with $k$ colors as a map
$\phi\colon V(G) \to \{1, \ldots, k\}$.
In this case, a restriction of $\phi$ to $S$ is the map $\phi|_S\colon S \to \{1, \ldots, k\}$
with $\phi|_S(v) = \phi(v)$ for all $v \in S$.
For any $T \subseteq V(G)$ with $S \subseteq T$, we say that $\phi|_T$ extends $\phi|_S$.
%

%!TEX root=clique-col.tex
\section{Parameterized by Treewidth}\label{sec:treewidth}
In this section, we consider the \qcliquecol problem, for fixed $q \ge 2$, parameterized by treewidth.
First, in Section~\ref{sec:ccol:treewidth:algo}, 
we show that if we are given a tree decomposition of width $\treewidth$ of the input graph,
then \qcliquecol can be solved in time $\Ostar(q^{\treewidth})$.
After that, in Section~\ref{sec:ccol:treewidth:lb}, 
we show that this is tight according to \SETH,
by providing one reduction ruling out $\Ostar((2-\epsilon)^{\treewidth})$-time algorithms for \qqcliquecol{2} 
and another one ruling out $\Ostar((q-\epsilon)^{\treewidth})$-time algorithms for \qcliquecol when $q \ge 3$.

\subsection{Algorithm}\label{sec:ccol:treewidth:algo}
The algorithm is bottom-up dynamic programming along the given tree decomposition of the input graph.
As a subroutine, we will have to be able to check, at each bag $B_t$, if some subset $S \subseteq B_t$ 
contains a maximal clique in $G[B_t]$.
Doing this by brute force would add a \emph{multiplicative} factor of 
roughly~$2^{\treewidth}\cdot \treewidth^{\calO(1)}$ to the runtime which we cannot afford.
To avoid this increase in the runtime, we use fast subset convolution to build an oracle $\cliqueoracle_t$
that, once constructed, can tell us in constant time 
whether or not any subset $S \subseteq B_t$ contains a maximal clique in $G[B_t]$, for each node $t$.
Since it suffices to construct this oracle once per node,
this will infer only an \emph{additive} factor of $2^{\treewidth}\cdot\treewidth^{\calO(1)}$ per node to the runtime,
which does not increase the worst-case complexity for any~$q \ge 2$.

\begin{proposition}\label{prop:clique:oracle}
	There is an algorithm that given a graph $G$ on $n$ vertices, 
	constructs an oracle $\cliqueoracle_G$ in time $\Ostar(2^n)$,
	such that given a set $S \subseteq V(G)$, 
	$\cliqueoracle_G$ returns in constant time
	whether or not $S$ contains a clique that is maximal in $G$.
\end{proposition}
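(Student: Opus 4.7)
The plan is to precompute, for every subset $S \subseteq V(G)$, a Boolean value $f(S)$ indicating whether $S$ contains a maximal clique of $G$, and store these values in a length-$2^n$ table indexed by the bitmask of $S$; the oracle $\cliqueoracle_G$ then answers any query by a single table lookup, which takes constant time in the word-RAM model.

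First, I would compute the auxiliary indicator $g \colon 2^{V(G)} \to \{0, 1\}$ defined by $g(K) = 1$ iff $K$ is a maximal clique of $G$. For each $K \subseteq V(G)$, one can verify whether $K$ induces a clique in $G$ and whether no vertex of $V(G) \setminus K$ is complete to $K$ in time $\calO(n^2)$, so computing $g$ on the whole of $2^{V(G)}$ takes $\Ostar(2^n)$ time in total.

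Next, the key observation is that $S$ contains a maximal clique of $G$ if and only if $f(S) \defeq \bigvee_{K \subseteq S} g(K)$ equals $1$. The function $f$ is the Boolean zeta transform of $g$ over the subset lattice, and can be computed for all $S$ simultaneously by the standard $n$-round sweep: fix an enumeration $v_1, \ldots, v_n$ of $V(G)$, initialize $f \defeq g$, and for each $i = 1, \ldots, n$ update $f(S) \leftarrow f(S) \vee f(S \setminus \{v_i\})$ for every $S \subseteq V(G)$ containing $v_i$. A straightforward induction on $i$ shows that after round $i$, the current value of $f(S)$ equals $\bigvee_{K \subseteq S,\, K \cup \{v_1, \ldots, v_i\} \supseteq K} g(K)$, so after the $n$-th round $f$ is as desired. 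The sweep performs $\calO(n \cdot 2^n) = \Ostar(2^n)$ elementary operations in total.

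The only real insight is the identification of the target predicate ``$S$ contains a maximal clique of $G$'' with a subset-OR of the maximal-clique indicator, which then immediately falls to the standard zeta-transform technique; neither this step nor the brute-force construction of $g$ constitutes a genuine obstacle, as both fit comfortably within the $\Ostar(2^n)$ budget promised by the statement.
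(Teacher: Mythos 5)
Your proposal is correct, and it reaches the same key identification as the paper --- namely that the query predicate ``$S$ contains a maximal clique'' is the subset-OR (zeta transform) of the indicator $g$ of maximal cliques, which is itself computable by brute force in $\Ostar(2^n)$ --- but it implements this transform differently. The paper phrases the computation as a subset convolution $h = g \convol c$ with the constant-one function $c$, so that $h(X)$ counts the maximal cliques inside $X$, and invokes the $\Ostar(2^n)$ fast subset convolution algorithm of Bj\"orklund et al.\ as a black box; you instead compute the Boolean zeta transform directly by the standard $n$-round sum-over-subsets sweep in $\calO(n \cdot 2^n)$ time. Your route is the more elementary and self-contained of the two (no external algorithmic machinery is needed, and OR suffices where the paper counts), while the paper's formulation buys brevity by citation; both meet the $\Ostar(2^n)$ preprocessing and constant-time query bounds. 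One small slip: your stated induction invariant, ``after round $i$, $f(S) = \bigvee_{K \subseteq S,\ K \cup \{v_1,\ldots,v_i\} \supseteq K} g(K)$,'' has a vacuous side condition; it should read $K \cup \{v_1,\ldots,v_i\} \supseteq S$ (equivalently $S \setminus K \subseteq \{v_1,\ldots,v_i\}$), which is what your update rule actually maintains and which yields the desired $f$ after round $n$.
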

\begin{proof}
	Before we proceed with the proof, recall that for a set $\Omega$, and two functions $\alpha$ and $\beta$ defined on $2^{\Omega}$,
	their \emph{subset convolution} \(\convol\) is defined as: for all $S \in 2^{\Omega}$, 
	$(\alpha \convol \beta)(S) = \sum_{T \subseteq S}\alpha(T)\beta(S \setminus T)$.
	Let $f\colon 2^{V(G)} \to \{0, 1\}$ be the function defined as follows. For all $X \subseteq V(G)$, we let
	\begin{align*}
		f(X) \defeq\left\lbrace
			\begin{array}{ll}
				1, &\mbox{if $X$ contains a maximal clique,} \\
				0, &\mbox{otherwise.}
			\end{array}
		\right.
	\end{align*}
	
	To prove the statement, we have to show how to compute all values of $f$ within the claimed time bound.
	We define $g \colon 2^{V(G)} \to \{0, 1\}$ to be the function such that for all $X \subseteq V(G)$, 
	$g(X) = 1$ if and only if $X$ is a maximal clique in $G$. 
	The values of $g$ can be computed in time $\Ostar(2^n)$ by brute force.
	We define a function $c \colon 2^{V(G)} \to \{0, 1\}$ as $c(X) = 1$ for all $X \subseteq V(G)$,
	and we let $h = g \convol c$, which can be computed in time $\Ostar(2^n)$~\cite{BHKK07}.
	For each set $X$, we have that there are $h(X)$ subsets of $X$ that are a maximal clique in $G$.
	Finally, we obtain $f$ as 
	\begin{align*}
		\forall X \subseteq V(G)\colon f(X) \defeq\left\lbrace
			\begin{array}{ll}
				1, &\mbox{if } h(X) \ge 1, \\
				0, &\mbox{otherwise,}
			\end{array}
		\right.
	\end{align*}
	which costs an additional $\Ostar(2^n)$ in the runtime.
\end{proof}

\begin{theorem}\label{thm:ccol:alg:tw}
	For any fixed $q \ge 2$, 
	there is an algorithm that given an $n$-vertex graph $G$ and a tree decomposition of $G$ of width $\treewidth$,
	decides whether $G$ has a clique coloring with $q$ colors in 
	time~$\calO(q^{\treewidth} \cdot \treewidth^{\calO(1)} \cdot n)$,
	and constructs one such coloring, if it exists.
\end{theorem}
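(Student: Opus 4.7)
The plan is a bottom-up dynamic programming along a nice tree decomposition of width $\treewidth$, obtained in linear time from the given decomposition via Lemma~\ref{lem:nice:td}. For each node $t$ and each $\phi \colon B_t \to \{1, \ldots, q\}$ the table stores a Boolean $D[t, \phi]$ intended to be true exactly when $\phi$ extends to some $\phi^* \colon V_t \to \{1, \ldots, q\}$ under which no maximal clique of $G$ that lies in $V_t$ and contains at least one vertex of $V_t \setminus B_t$ is monochromatic. There are at most $q^{\treewidth+1}$ entries per node, matching the target running time. At the root $r$ the bag is empty, so every non-empty clique has a vertex in $V_r \setminus B_r$, and $D[r, \emptyset]$ is true iff $G$ admits a $q$-clique coloring.

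The correctness of this state rests on a \emph{resolution lemma}: for any clique $K \subseteq V_t$ with $K \not\subseteq B_t$, $K$ is maximal in $G$ if and only if $K$ is maximal in $G[V_t]$. A would-be witness $u \in V(G) \setminus V_t$ adjacent to all of $K$ must, by (T2), share a bag with each $w \in K$; by (T3) that bag lies outside $T_t$, which together with $w \in V_t$ forces $t \in U_w$, i.e.\ $w \in B_t$; as this holds for every $w \in K$, we get $K \subseteq B_t$, a contradiction. Hence every "resolved" clique (one with a vertex in $V_t \setminus B_t$) can be certified inside $G[V_t]$, and every maximal clique of $G$ is resolved by the time all of its vertices have been forgotten. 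Using this, the Leaf, Introduce and Join transitions are immediate: at a Leaf $D[t, \emptyset] = \text{true}$; at Introduce the new vertex $v$ is non-adjacent to every vertex of $V_s \setminus B_s$ (their bag-sets in $T$ are disjoint), so no resolved clique changes status and $D[t, \phi] = D[s, \phi|_{B_s}]$; at Join with children $s_1, s_2$ every vertex of $V_{s_i} \setminus B_{s_i}$ has all its $G$-neighbors in $V_{s_i}$, so every resolved clique lies entirely in one $V_{s_i}$, giving $D[t, \phi] = D[s_1, \phi] \wedge D[s_2, \phi]$.

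The interesting step is the Forget transition. Let $t$ forget $v$ with child $s$, and let $\phi'_c$ be the extension of $\phi$ to $B_s$ with $\phi'_c(v) = c$. The cliques newly resolved at $t$ are precisely those $K \subseteq B_s$ with $v \in K$, and we must verify that no monochromatic such $K$ is maximal in $G[V_s]$. The subtlety is that maximality in $G[V_s]$ is \emph{not} the same as maximality in $G[B_s]$: a forgotten $u \in V_s \setminus B_s$ adjacent to every vertex of $K$ can prevent $K$ from being maximal even when no vertex of $B_s \setminus K$ does so. To accommodate this I carry along the DP the family of neighborhood traces
\begin{equation*}
	\mathcal{N}_s \;\defeq\; \{N_G(u) \cap B_s \mid u \in V_s \setminus B_s\},
\end{equation*}
which has at most $2^{\treewidth+1}$ members and is updated by simple rules (unchanged at Introduce; at Forget, replace each $N$ by $N \setminus \{v\}$ and insert $N_G(v) \cap B_t$; at Join, take $\mathcal{N}_{s_1} \cup \mathcal{N}_{s_2}$). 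Mimicking Proposition~\ref{prop:clique:oracle} I define $g_v(X) = 1$ iff $X \subseteq B_s$ is a clique in $G[B_s]$ with $v \in X$, no $w \in B_s \setminus X$ is adjacent to every vertex of $X$, and $X \not\subseteq N$ for all $N \in \mathcal{N}_s$; the resolution lemma and a case analysis on $V_s \setminus X$ show that $g_v$ is exactly the indicator of "$X$ is a maximal clique of $G[V_s]$ containing $v$". Computing $g_v$ for all $X \subseteq B_s$ (via a fast zeta transform over the characteristic function of $\mathcal{N}_s$ to handle the last condition) and its subset-sum $f_v(S) \defeq \sum_{X \subseteq S} g_v(X)$ costs $\Ostar(2^{\treewidth})$, which is absorbed into $q^{\treewidth}$ since $q \ge 2$. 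The Forget transition then reads
\begin{equation*}
	D[t, \phi] \;=\; \bigvee_{c=1}^{q}\Bigl(D[s, \phi'_c] \;\wedge\; f_v\bigl( (\phi'_c)^{-1}(c) \cap B_s \bigr) = 0\Bigr).
\end{equation*}

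Correctness follows by induction on $t$ from the resolution lemma and the case analysis above; a witness coloring is reconstructed by standard back-tracing through the tables. Over the $\calO(\treewidth \cdot n)$ nodes of the nice decomposition each node is processed in $\calO(q^{\treewidth} \cdot \treewidth^{\calO(1)})$ time, for a total of $\calO(q^{\treewidth} \cdot \treewidth^{\calO(1)} \cdot n)$. The main obstacle I expect is a clean proof of the forget-node oracle, namely that the four clauses defining $g_v$ really capture "maximal in $G[V_s]$ through $v$" and that the update rules for $\mathcal{N}_s$ preserve its invariant; both reduce to the same tree-decomposition machinery used in the resolution lemma.
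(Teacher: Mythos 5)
Your proposal is correct in outline and follows the same overall scheme as the paper: the same states (bag colorings), the same invariant (no monochromatic clique that is maximal and already contains a forgotten vertex), identical leaf/introduce/join rules, and an $\Ostar(2^{\treewidth})$ preprocessing at forget nodes that is absorbed because $q \ge 2$; your ``resolution lemma'' plays exactly the role of the paper's observation that a maximal clique of $G_t$ meeting $V_t \setminus B_t$ is already maximal in $G$. The genuine difference is the forget-node test. The paper builds, via Proposition~\ref{prop:clique:oracle}, an oracle for maximal cliques of the \emph{bag graph} $G[B_t]$ and rejects color $c$ for $v$ iff $N(v) \cap \gamma_t^{-1}(c)$ contains a clique maximal in $G[B_t]$; you instead test maximality of $X \ni v$ in $G[V_s]$, which matches the table definition verbatim, by (i) demanding that no $w \in B_s \setminus X$ be complete to $X$ (so an extender must in particular be adjacent to $v$) and (ii) carrying the family $\mathcal{N}_s$ of neighborhood traces of forgotten vertices to account for extenders that have left the bag. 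Both points are real discrepancies between ``maximal in $G[B_t]$'' and ``maximal in $G_t$'': with $B_t = \{x,w\}$, edges $vx$ and $xw$ only, and $v,x$ colored $c$, the clique $\{v,x\}$ is monochromatic and maximal in $G_t$, yet $N(v)\cap\gamma_t^{-1}(c)=\{x\}$ contains no maximal clique of $G[B_t]$, so the bag-local oracle stays silent; conversely, in a triangle $u,v,x$ with $u$ forgotten before $v$ and $v,x$ colored $c$, the oracle fires on $\{x\}$ although $\{v,x\}$ is not maximal in $G_t$. So your extra bookkeeping is not an artifact: it is precisely what makes the forget transition provably equivalent to the stated table entries, and it buys a clean correctness proof at the modest price of one family of at most $2^{\card{B_s}}$ traces per node plus a couple of zeta transforms, all within the $\calO(q^{\treewidth}\cdot\treewidth^{\calO(1)})$ per-node budget; the paper's version buys a lighter, purely bag-local test, but its equivalence to the invariant is not spelled out and, taken literally, diverges on examples like the two above. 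What remains on your side is exactly what you flag -- the formal verification that $g_v$ characterizes ``maximal in $G[V_s]$ through $v$'' and that the update rules preserve the defining property of $\mathcal{N}_s$ -- and both are routine consequences of the same separator argument as your resolution lemma.
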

\begin{proof}
	First, we transform the given tree decomposition of $G$ into a nice tree deocmposition $(T, \calB)$.
	This can be done in $\calO(n)$ time by Lemma~\ref{lem:nice:td}.
	We may assume that the bags at leaf nodes are empty, and that $T$ is rooted in some node $\rootnode \in V(T)$, 
	and $B_{\rootnode} = \emptyset$.
	
	We do standard bottom-up dynamic programming along $T$.
	Let $t \in V(T)$. 
	A partial solution is a $q$-coloring of $G_t$ that satisfies one additional property.
	Suppose that in some coloring of $G_t$, there is a monochromatic maximal clique $X$ in $G_t$
	that has some vertex $v \in V_t \setminus B_t$.
	Then, $v$ has no neighbors in $V(G) \setminus V_t$, therefore $X$ is also a maximal clique in $G$.
	This means that the present coloring cannot be extended to a coloring 
	in which $X$ becomes non-maximal, and therefore we can disregard it.
	
	In light of this, we define the table entries as follows.
	For each $t \in V(T)$ and function $\coloring_t \colon B_t \to \{1, \ldots, q\}$, 
	we let $\dptable[t, \coloring_t] = 1$ 
	if and only if there is a $q$-coloring $\coloring$ of $G_t$ such that
	\begin{itemize}
		\item $\coloring|_{B_t} = \coloring_t$, and
		\item for each clique $X$ in $G_t$ that is monochromatic under $\coloring$, $X \subseteq B_t$.
	\end{itemize}
	
	Since $B_{\rootnode} = \emptyset$, we can immediately observe that the solution to the instance
	can be read off the table entries at the root node, once computed. 
	Throughout the following we denote by $\gamma_\emptyset$ the $q$-coloring defined on an empty domain.
	\begin{nestedobservation}
		$G$ has a clique coloring with $q$ colors if and only if $\dptable[\rootnode, \gamma_\emptyset] = 1$.
	\end{nestedobservation}
	
	We now show how to compute the table entries for the different types of nodes, 
	assuming that the table entries at the children, if any, have previously been computed.
	\begin{description}
		\item[Leaf Node.]
		If $t$ is a leaf node, then $B_t = \emptyset$ and we only have to consider the empty coloring.
		We set $\dptable[t, \gamma_\emptyset] = 1$.
		\item[Introduce Node.]
		Let $t \in V(T)$ be an introduce node with child $s$, and let $v$ be the vertex introduced at $t$,
		i.e.\ we have that $B_t = B_s \cup \{v\}$. 
		Since $V_t \setminus B_t = (V_t \setminus \{v\}) \setminus (B_t \setminus \{v\}) = V_s \setminus B_s$,
		and since $v$ has no neighbors in $V_t \setminus B_t$ by the properties of a tree decomposition,
		it is clear that a coloring of $G_t$ has a monochromatic maximal clique with a vertex in $V_t \setminus B_t$
		if and only if
		its restriction to $V_s$ is a coloring of $G_s$ that has a monochromatic maximal clique with a vertex in $V_s \setminus B_s$.
		Therefore, for each $\coloring_t \colon B_t \to \{1, \ldots, q\}$, we simply let $\dptable[t, \coloring_t] = 1$ 
		if and only if $\dptable[s, \coloring_t|_{B_s}] = 1$.
		\item[Join Node.]
		Let $t \in V(T)$ be a join node with children $s_1$ and $s_2$ and recall that $B_t = B_{s_1} = B_{s_2}$.
		In this case, for any $\coloring_t\colon B_t \to \{1, \ldots, q\}$,
		$G_t$ has a $q$-coloring $\coloring$ with $\coloring|_{B_t} = \coloring_t$ 
		without a monochromatic maximal clique in $V_t \setminus B_t$
		if and only if
		the analogous condition holds for both $G_{s_1}$ and $G_{s_2}$.
		Therefore, for all such $\coloring_t$, we let $\dptable[t, \coloring_t] = 1$ 
		if and only if $\dptable[s_1, \coloring_t] = \dptable[s_2, \coloring_t] = 1$.
		\item[Forget Node.]
		Let $t \in V(T)$ be a forget node with child $s$ and let $v$ be the vertex forgotten at $t$,
		i.e.\ $B_s = B_t \cup \{v\}$.
		A partial solution at node $s$ may have a monochromatic maximal clique using the vertex $v$,
		provided that the clique is fully contained in $B_s$,
		while partial solutions at the node $t$ may not.
		Therefore, for a given coloring $\coloring_t\colon B_t \to \{1, \ldots, q\}$,
		we can check
		whether or not there is a partial solution in $G_t$ whose restriction to $B_t$ is equal to $\coloring_t$ as follows.
		For each color $c \in \{1, \ldots, q\}$, extend $\coloring_t$ to a coloring $\coloring_s$ of $B_s$ by assigning vertex $v$ color $c$.
		Check if there is a partial solution at node $s$ whose restriction to $B_s$ is $\coloring_s$,
		and if there is no maximal clique in $N(v) \cap \coloring_s^{-1}(c)$.
		If this is the case for some color $c$, then we set $\dptable[t, \coloring_t] = 1$, 
		and if it is not the case for any color, then we set $\dptable[t, \coloring_t] = 0$.
		It is clear from this description that this is correct.
		However, we have to apply one additional trick to ensure that we do not exceed the targeted runtime bound.
		Since in the worst case, there are $q^{\treewidth}$ many colorings to consider 
		(note that since $t$ is a forget node, we have that $\card{B_t} \le \treewidth$),
		we can only spend constant time for the computation of each entry $\dptable[t, \coloring_t]$.
		Verifying if $N(v) \cap \coloring_s^{-1}(v)$ contains a maximal clique 
		may take time $\Ostar(2^{\treewidth})$ in the worst case.
		We overcome this issue by constructing a maximal clique containment oracle $\cliqueoracle_t$ of $G[B_t]$
		using Proposition~\ref{prop:clique:oracle}.
		Once constructed, this oracle allows for checking whether a set contains a maximal clique in constant time.
		We describe the entire procedure of how to compute table entries at forget nodes 
		in Algorithm~\ref{alg:ccol:tw:forget}.
	\end{description}
		\begin{algorithm}
			\SetKwInOut{Input}{Input}
			\Input{$G$, $(T, \calB)$ as above, forget node $t \in V(T)$}
			Let $v \in B_s \setminus B_t$ be the vertex forgotten at $t$\;
			Construct the clique oracle $\cliqueoracle_t$ of $G[B_t]$ using Proposition~\ref{prop:clique:oracle}\;
			\ForEach{$\coloring_t \colon B_t \to \{1, \ldots, q\}$}{
				$\dptable[t, \coloring_t] \gets 0$\;
				\ForEach{$c \in \{1, \ldots, q\}$}{
					Let $\coloring_s \colon B_s \to \{1, \ldots, q\}$ be such that
						for all $u \in B_t$, $\coloring_s(u) = \coloring_t(u)$, and
						$\coloring_s(v) = c$\;
					\If{$\dptable[s, \coloring_s] = 1$}{
						\lIf{$\cliqueoracle_t(N(v) \cap \coloring_t^{-1}(c)) = 0$}{$\dptable[t, \coloring_t] \gets 1$}
					}
				}
			}
			\caption{Algorithm to compute all table entries at a forget node $t$ with child $s$,
				assuming all table entries at $s$ have been computed. 
				(Notation: For a set $S \subseteq B_t$, $\cliqueoracle_t(S) = 0$ if and only $G[S]$ contains no maximal clique.)}
			\label{alg:ccol:tw:forget}
		\end{algorithm}

	This completes the description of the algorithm. 
	Correctness follows from the description of the computation of the table entries, by induction on the height of each node.
	For the runtime, observe that there are at most $q^{\treewidth + 1}$ table entries to consider at each node,
	and it is clear that the computation of a table entry at a leaf, introduce, or join node takes constant time.
	For forget nodes, the construction of $\cliqueoracle_t$ takes 
	time~$\calO(2^{\treewidth} \cdot \treewidth^{\calO(1)}) \le \calO(q^{\treewidth} \cdot \treewidth^{\calO(1)})$
	by Proposition~\ref{prop:clique:oracle}.
	After that, computation of each table entry takes $\calO(q) = \calO(1)$ time.
	Therefore, the total time to compute the table entries at a forget node is $\calO(q^{\treewidth} \cdot \treewidth^{\calO(1)})$.
	Since $\card{V(T)} = \calO(\treewidth \cdot n)$, 
	the total runtime of the algorithm is~$\calO(q^{\treewidth} \cdot \treewidth^{\calO(1)} \cdot n)$.
	Using memoization techniques, the algorithm can construct a coloring, if one exists.
\end{proof}

\subsection{Lower Bound}\label{sec:ccol:treewidth:lb}
In this section we show that the previously presented algorithm is optimal under \SETH.
In fact, we prove hardness for a much larger parameter, namely the distance to a linear forest (for $q = 2$),
and the distance to a caterpillar forest (for $q \ge 3$).
Note that both paths and caterpillars have pathwidth $1$, and clearly, they do not contain any cycles.
Therefore, a lower bound parameterized by the (vertex deletion) distance to a linear/caterpillar forest 
implies a lower bound for the parameter pathwidth plus feedback vertex set number.
For $q = 2$, we give a reduction from \textsc{$s$-Not-All-Equal SAT} ($s$-\naesat) on $n$ variables.
Cygan et al.~\cite{CyganEtAl2016} showed that under \SETH, 
for any $\epsilon > 0$, there is some constant $s$ such that $s$-\naesat cannot be solved in time $\Ostar((2-\epsilon)^n)$.
For all $q \ge 3$, we reduce from \textsc{$q$-List Coloring}, 
where we are given a graph $G$ and a list for each of its vertices
which is a subset of $\{1, 2, \ldots, q\}$, and the question is whether $G$ has
a proper coloring such that each vertex receives a color from its list.
Parameterized by the size $t$ of a deletion set to a linear forest,
this problem is known to have no $\Ostar((q-\epsilon)^t)$-time algorithms under \SETH~\cite{JaffkeJansen2017}.
Our construction uses the fact that on triangle-free graphs, 
the proper colorings and the clique colorings coincide,
and exploits properties of Mycielski graphs.

We first give the lower bound for the case $q = 2$.
We would like to remark that Kratochv\'{i}l and Tuza~\cite{KT2002} gave a reduction from \textsc{Not-All-Equal SAT} to 
\qqcliquecol{2} as well, but their reduction does not imply the fine-grained lower bound we aim for here:
the resulting graph is at distance $2n$ to a disjoint union of cliques of constant size (at most $s$).
This only rules out $\Ostar((\sqrt{2}-\epsilon)^t)$-time algorithms parameterized by pathwidth,
and does not give any lower bound if the feedback vertex set number is another component of the parameter.
\begin{theorem}
	For any $\epsilon > 0$, \qqcliquecol{2} parameterized by the distance $t$ to a linear forest 
	cannot be solved in time $\Ostar((2-\epsilon)^t)$,
	unless \SETH fails.
\end{theorem}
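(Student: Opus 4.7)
The plan is to reduce from $s$-\textsc{NAE-SAT}: by Cygan et al.~\cite{CyganEtAl2016}, for every $\epsilon>0$ there is an $s=s(\epsilon)$ such that $s$-\textsc{NAE-SAT} has no $\Ostar((2-\epsilon)^n)$-time algorithm under $\SETH$. Given such an instance $\phi$ with $n$ variables and $m$ clauses, I intend to build in polynomial time a graph $G_\phi$ and a set $D\subseteq V(G_\phi)$ with $|D|=n+o(n)$ such that (i)~$G_\phi-D$ is a linear forest and (ii)~$\phi$ is NAE-satisfiable iff $G_\phi$ admits a $2$-clique-coloring. An $\Ostar((2-\epsilon)^t)$-time algorithm for \qqcliquecol{2} parameterized by the distance $t$ to a linear forest would then yield an $\Ostar((2-\epsilon)^n)$-time algorithm for $s$-\textsc{NAE-SAT}, contradicting $\SETH$.

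The graph $G_\phi$ has three building blocks. A \emph{variable vertex} $v_i\in D$ per variable $x_i$, with color $1$ encoding ``true''; two \emph{pole vertices} $z_1,z_2\in D$ joined by a maximal edge, so that up to color symmetry $z_1$ has color $1$ and $z_2$ has color $2$; and, for each clause $C=(\ell_1,\ldots,\ell_k)$, a gadget linking the $v_{j(i)}$'s to the poles. The clause gadget contains two \emph{pinned} clause vertices $c_C^\top,c_C^\bot$: through length-two chains of maximal edges $c_C^\top-\alpha_C^\top-z_1$ and $c_C^\bot-\alpha_C^\bot-z_2$ they are forced to the colors of $z_1$ and $z_2$ respectively. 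For each literal $\ell_i$ I designate a \emph{literal vertex} $\lambda_C^i$ whose color must equal the truth value of $\ell_i$: $\lambda_C^i=v_{j(i)}$ for positive literals, and for negative literals a fresh \emph{negation vertex} joined to $v_{j(i)}$ by a maximal edge, so that it is colored opposite to $v_{j(i)}$. Finally I add the edges needed to make both $\{\lambda_C^1,\ldots,\lambda_C^k,c_C^\top\}$ and $\{\lambda_C^1,\ldots,\lambda_C^k,c_C^\bot\}$ maximal cliques of $G_\phi$. Monochromaticity of the former is equivalent to ``all literals of $C$ are true'', monochromaticity of the latter to ``all literals of $C$ are false'', so forbidding both through the clique-coloring condition is exactly the NAE constraint for $C$.

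The main obstacle is ensuring that $G_\phi-D$ is a linear forest. After deletion of $D$ the helpers $\alpha_C^\top,\alpha_C^\bot$ become harmless pendants of $c_C^\top,c_C^\bot$, but a clause with $r\ge 2$ negative literals would naively produce a clique of size $r+2$ in $G_\phi-D$ consisting of its negation vertices together with $c_C^\top$ and $c_C^\bot$. I plan to tackle this in two stages. First, I exploit the invariance of NAE under simultaneously complementing all literals of a single clause to preprocess $\phi$ so that every clause has at most $\lfloor k/2\rfloor$ negative literals. Second, I split each clause gadget into private copies indexed by the individual negations, each copy carrying its own pinned vertices and its own pair of NAE cliques, so that every negation vertex $\bar v$ sits as the internal vertex of a single length-two path $c^\top-\bar v-c^\bot$ in $G_\phi-D$. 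The private copies have to be wired so that their clique-coloring constraints \emph{collectively} rule out exactly the two assignments ``all literals true'' and ``all literals false'' of the original clause, and nothing more; the trap to avoid is that a naïve splitting enforces the strictly stronger conjunction of partial NAEs over sub-clauses. Making this splitting semantically tight, while respecting the linear-forest restriction, is the hard part of the argument and is where most of the technical work lies.

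Once the gadget is in place, $G_\phi-D$ is a disjoint union of short paths (so $|D|=n+2$), and the equivalence between $2$-clique-colorings of $G_\phi$ and NAE-satisfying assignments of $\phi$ is a routine case analysis on each clause gadget; the $\SETH$-based lower bound follows by the transfer described above.
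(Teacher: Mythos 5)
There is a genuine gap at the crux of your construction. Your plan correctly starts from $s$-\naesat via Cygan et al.\ and aims for a deletion set of size $n+\calO(1)$, but the clause gadget that makes this work is never actually built. As you yourself observe, a clause with $r\ge 2$ negative literals puts its $r$ negation vertices together with the pinned vertices $c_C^\top,c_C^\bot$ into a clique outside $D$, destroying the linear-forest property; your two remedies do not close this. Complementing clauses only guarantees at most $\lfloor k/2\rfloor$ negative literals, which is still $\ge 2$ for $k\ge 4$, and the ``private copies'' splitting is precisely the step you leave open, conceding it is ``the hard part of the argument.'' A naive split enforces the conjunction of sub-NAE constraints, which is strictly stronger than NAE, and you give no mechanism for making the copies \emph{collectively} equivalent to the single NAE constraint; with only two colors and only maximal-clique constraints available, coordinating separate copies requires shared forced vertices, which tends to reintroduce exactly the triangles outside $D$ you are trying to avoid. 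In addition, several of your forcing claims are fragile: the edges $z_1z_2$, the pinning chains $c_C^\top\hbox{--}\alpha_C^\top\hbox{--}z_1$, and the negation-vertex edges $\bar v v_{j(i)}$ must be \emph{maximal} cliques for the color propagation to work, but once the big clause cliques are added, a variable vertex occurring positively in another clause can become a common neighbor of the endpoints of such an edge, placing it inside a triangle and voiding the forcing; none of this is checked.

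For contrast, the paper's construction sidesteps the multiple-negation problem entirely rather than splitting gadgets: each clause contributes a single short path outside the deletion set ($P_4$ with endpoints $a_i,b_i$ for monotone clauses, forcing $f(a_i)\neq f(b_i)$; $P_3$ for non-monotone clauses, forcing $f(a_i)=f(b_i)$), with $a_i$ attached to the variable vertices of the positive literals and $b_i$ to those of the negative literals, and with these variable sets made cliques \emph{inside} the deletion set $V'$. Two universal vertices over $V'$ (also deleted) guarantee that every maximal clique inside $V'\cup\{u,v\}$ contains both of them, neutralizing spurious monochromatic cliques created by the variable cliques. Since negation is handled by which endpoint attaches to which polarity class, no negation vertices exist at all, $G_\phi-V'$ is a disjoint union of paths, and the NAE semantics drop out of the two maximal cliques $V_{\mathrm{vars}}\cup\{a_i\}$ and $V_{\mathrm{vars}}\cup\{b_i\}$ (resp.\ the positive/negative variants). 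Your proposal would need a concrete gadget of comparable quality at its center before it constitutes a proof.
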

\begin{proof}
We give a reduction from the well-known $s$-\naesat problem, in which we are given a boolean CNF formula $\cnff$
whose clauses are of size at most $s$, and the question is whether there is a truth assignment to the variables 
of $\cnff$, such that in each clause, at least one literal evaluates to true and at least one literal evaluates to false.

Let $\cnff$ be a boolean CNF formula on $n$ variables $x_1,\ldots, x_n$ with maximum clause size $s$. We denote by $\clauses(\cnff)$ the set of clauses of $\cnff$ and by $\variables(C)$ the set of variables that appear in the clause $C$ of $\cnff$.

Given $\phi$, we construct an instance $G_\phi$ for \qqcliquecol{2} as follows. For each variable $x_i$, we create a vertex $v_i$ in $G$. Let $V'=\{v_1,\ldots,v_n\}$. For each set $S$ of variables, let $V_S=\{v_i~|~x_i\in S\}$. For each clause $C_i$ of $\cnff$, we add the following clause gadget to $G_\phi$. If $C_i$ is monotone, add a path on four vertices to $G_\phi$, the end vertices of which are $a_i$ and $b_i$. Make $N(a_i)\cap V'=N(b_i)\cap V'=V_{\variables(C_i)}$, and make $V_{\variables(C_i)}\subset V'$ a clique. If $C_i$ is not monotone, let $\positive(C)$ (resp.\ $\negative(C)$) denote the set of variables with positive (resp.\ negative) literals in $C$. Add a path on three vertices to $G_\phi$, the end vertices of which are $a_i$ and $b_i$, make $N(a_i)\cap V'=V_{\positive(C)}$ and make $V_{\positive(C)}$ a clique. Analogously, make $N(b_i)\cap V'=V_{\negative(C)}$ and make $V_{\negative(C)}$ a clique. Finally, add two adjacent vertices $u,v$ to $G_\phi$ and make $N[u]=N[v]=\{u,v\}\cup V'$. See Figure~\ref{fig:reduction1}.

\begin{figure}
\centering
\includegraphics[width=8cm]{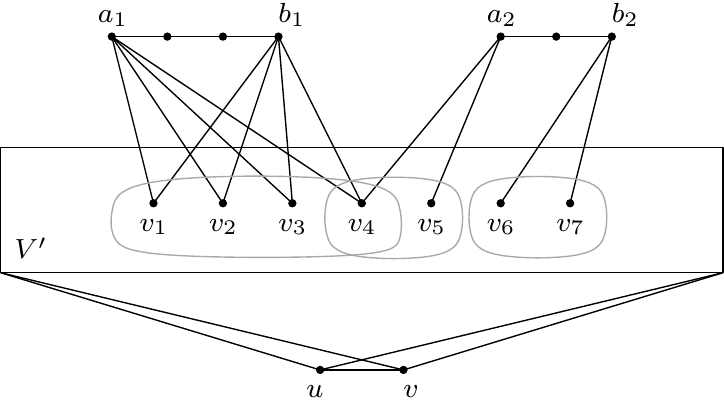}
\caption{Depiction of $G_\phi$ with two clauses, namely a monotone clause $C_1=\neg x_1\vee \neg x_2 \vee \neg x_3 \vee \neg x_4$ and a non-monotone clause $C_2=x_4 \vee x_5 \vee \neg x_6\vee\neg x_7$. Note that $G_\phi - V'$ is a linear forest.}
\label{fig:reduction1}
\end{figure}

We will show that $G_\phi$ is a yes-instance to \qqcliquecol{2} if and only if $\cnff$ is a yes-instance to $s$-\naesat. We first make the following observation about the maximal cliques of $G_\phi$, which follows directly from the fact that the vertices $u$ and $v$ are complete to $V'$. 

\begin{nestedobservation}\label{obs:maxcliquesV}
The vertices $u$ and $v$ belong to every maximal clique of $G_\phi[V'\cup\{u,v\}]$.
\end{nestedobservation}

\begin{nestedclaim}\label{claim:aibi}
Let $f: V(G_\phi) \rightarrow \{0,1\}$ be a 2-clique coloring of $G_\phi$ and $C_i$ be a clause of $\phi$. Then, if $C_i$ is monotone, then $f(a_i)\neq f(b_i)$. Otherwise, $f(a_i)=f(b_i)$.
\end{nestedclaim}

\begin{claimproof}
If $C_i$ is monotone, $a_i$ and $b_i$ are the end vertices of a path on four vertices, each edge of which is a maximal clique of $G_\phi$. Thus, $f(a_i)\neq f(b_i)$ in any 2-clique coloring $f$ of $G_\phi$. Similarly, if $C_i$ is not monotone, $a_i$ and $b_i$ are the end vertices of a path on three vertices, each edge of which is a maximal clique of $G_\phi$. Hence $f(a_i)=f(b_i)$.
\end{claimproof}

Now, suppose $G$ has a 2-clique coloring $f: V(G_\phi) \rightarrow \{0,1\}$. We construct a truth assignment for $\{x_1,\ldots,x_n\}$ according to the colors assigned to the vertices of $V'$ by $f$. That is, if $f(v_i)=0$, we set $x_i$ to false, and if $f(v_i)=1$, we set $x_i$ to true. We will now show that this assignment satisfies all clauses of $\phi$. Let $C_i$ be a clause of $\phi$. First, assume that $C_i$ is monotone. By Claim~\ref{claim:aibi}, $f(a_i)\neq f(b_i)$. Since $V_{\variables(C_i)}\cup\{a_i\}$ is a maximal clique of $G_\phi$, the vertices of $V_{\variables(C_i)}$ cannot all be colored with $f(a_i)$. Similarly, $V_{\variables(C_i)}\cup\{b_i\}$ is a maximal clique of $G_\phi$, the vertices of $V_{\variables(C_i)}$ cannot all be colored with $f(b_i)$. Thus, there exist two vertices $v_j,v_k\in V_{\variables(C_i)}$ such that $f(v_j)\neq f(v_k)$. Since $C_i$ is monotone, this implies that $x_j$ and $x_k$ are not both evaluated to the same value and therefore $C_i$ is satisfied. Now assume $C_i$ is not monotone. By Claim~\ref{claim:aibi}, $f(a_i)=f(b_i)$. Hence, since $V_{\positive(C_i)}\cup\{a_i\}$ and $V_{\negative(C_i)}\cup\{b_i\}$ are maximal cliques of $G$, there exists $v_j\in V_{\positive(C_i)}$ and $v_k\in V_{\negative(C_i)}$ such that $f(v_j)=f(v_k)$. This implies that $x_j$ and $x_k$ are not evaluated to the same value under the proposed assignment and thus $C_i$ is satisfied.

For the other direction, assume $\phi$ admits an assignment $\xi$ satisfying all clauses. We construct a clique coloring $f: V(G_\phi) \rightarrow \{0,1\}$ for $G_\phi$ in the following way. Color the vertices of $V'$ according to the assignment of the variables of $\phi$. That is, if $\xi(x_i)=\mbox{true}$  (resp.\ $\xi(x_i)=\mbox{false}$), define $f(v_i)=1$ (resp.\ $f(v_i)=0$). If $C_i$ is monotone, let $a_ia_i'b_i'b_i$ be the path on four vertices connecting $a_i$ and $b_i$ in the clause gadget of $C_i$. Define $f(a_i)=f(b_i')=1$ and $f(a_i')=f(b_i)=0$. If $C_i$ is not monotone, let $a_ia_i'b_i$ be the three vertex path connecting $a_i$ and $b_i$ in the clause gadget of $C_i$. If all the vertices of either $V_{\positive(C_i)}$ or $V_{\negative(C_i)}$ are colored 1, set $f(a_i)=f(b_i)=0$ and $f(a_i')=1$. Otherwise set $f(a_i)=f(b_i)=1$ and $f(a_i')=0$. Finally, define $f(u)=0$ and $f(v)=1$. To see that this is indeed a 2-clique coloring of $G_\phi$, first note that by Observation~\ref{obs:maxcliquesV}, no maximal clique contained in $G_\phi[V'\cup\{u,v\}]$ is monochromatic. Furthermore, since all paths of the clause gadgets are properly colored, no maximal clique contained in $G_\phi- (V'\cup\{u,v\})$ is monochromatic. It remains to show that for each clause $C_i$, the maximal cliques defined by $N[a_i]$ and $N[b_i]$ are not monochromatic. Let $C_i$ be a monotone clause. Since $C_i$ is satisfied, there exist $x_j,x_k\in \variables(C_i)$ such that $\xi(x_j)\neq \xi(x_k)$. Hence, $f(v_j)\neq f(v_k)$, which shows that $N[a_i]$ and $N[b_i]$ are each not monochromatic. If $C_i$ is not monotone, by definition the vertices of $N[a_i]$ and $N[b_i]$ are not all colored 1. Suppose all the vertices of $N[a_i]$ are colored 0. In particular, we have $f(a_i)=f(b_i)=0$. This implies that, by construction, all the vertices of $N(b_i)=V_{\negative(C_i)}$ are colored 1. However, this is a contradiction with the fact that the clause $C_i$ is satisfied, since all its literals are evaluated to false. Hence, $f$ is indeed a 2-clique coloring of $G_\phi$.

Finally, note that $G-V'$ is a disjoint union of paths of length at most four. 
Hence, $G$ is at distance $n$ to a linear forest.
Therefore, if for some $\epsilon > 0$, \qqcliquecol{2} parameterized by the distance $t$ to a linear forest can be solved in 
	time $\Ostar((2-\epsilon)^t)$, then $s$-\naesat can be solved in time $\Ostar((2-\epsilon)^n)$, 
	which would contradict \SETH~\cite{CyganEtAl2016}. This concludes the proof.
\end{proof}
%%%%%%%%%%%%%%%%%%%%%%%%%%%%%%%%%%%%%%%%%%%%%%%%%%%%%

We now turn to the case $q \ge 3$.
Our reduction is from \textsc{$q$-List-Coloring} parameterized by the distance $t$ to a linear forest,
which has no $\Ostar((q-\epsilon)^t)$-time algorithms under \SETH 
by a theorem due to Jaffke and Jansen~\cite{JaffkeJansen2017}.
For technical reasons, we need the lower bound in a slightly stronger form, 
in particular it has to hold when the input graphs are triangle-free.
The reduction presented in~\cite{JaffkeJansen2017} is from $s$-SAT on $n$ variables, 
and given a formula $\phi$, the graph $G_\phi$ of the resulting \qlistcol instance has the following structure.
The truth assignments of the variables of $\phi$ are encoded as colorings of 
a set of vertices $V'$ that are independent in $G_\phi$,
and for each clause $C$ in $\phi$ and each coloring of some subset $V_C \subseteq V'$ 
that corresponds to a truth assignment $\mu$ that does not satisfy $C$,
there is a path $P_\mu$ in $G$ that cannot be properly list colored 
if and only if the coloring $\mu$ appears on $V_C$.
This is ensured by connecting $P_\mu$ to $V_C$ via a matching,
which does not introduce triangles.
Since each edge of $G_\phi$ is either on such a path or part of one of such matching, 
there are no triangles in~$G_\phi$.
\begin{theorem}[Jaffke and Jansen~\cite{JaffkeJansen2017}]\label{thm:qlistcolbound}
For any $\epsilon > 0$ and any fixed $q\geq 3$, $q$-{\sc List Coloring} on triangle-free graphs parameterized by the distance $t$ to a linear forest cannot be solved in time $\Ostar((q-\epsilon)^t)$,
	unless \SETH fails.
\end{theorem}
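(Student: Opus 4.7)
The plan is to verify that the theorem follows from the reduction of Jaffke and Jansen \emph{as-is}, without any modification: one just has to inspect their construction and observe that the graph $G_\phi$ they produce is already triangle-free, so no extra work is required beyond reading off the structural properties already sketched in the excerpt.

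First, I would restate the relevant structural features of $G_\phi$ established by the Jaffke--Jansen reduction from $s$-SAT on $n$ variables. The vertex set decomposes as $V' \cup \bigcup_{C,\mu} V(P_\mu)$, where $V'$ is an independent set whose $q$-list-colorings are in bijection with truth assignments to the variables, and where for every clause $C$ and every partial assignment $\mu$ on the variables of $C$ that falsifies $C$, the path $P_\mu$ is attached to the corresponding subset $V_C \subseteq V'$ via a \emph{matching}. Crucially, each vertex of $P_\mu$ therefore has at most one neighbor in $V'$, the paths $P_\mu$ are pairwise vertex-disjoint, and no edges run between distinct paths or inside $V'$.

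Next, I would verify triangle-freeness by a short case analysis on how the three vertices of a hypothetical triangle $xyz$ distribute between $V'$ and the path portion $\bigcup V(P_\mu)$. Two or more vertices in $V'$ is impossible since $V'$ is independent. Three vertices in the path portion must all lie in a common $P_\mu$ (distinct paths are disjoint), contradicting that $P_\mu$ is an induced path. The remaining case has one vertex in $V'$ and two adjacent vertices of a path $P_\mu$; but then both path vertices would share the same $V'$-neighbor, contradicting the fact that the attachment from $P_\mu$ to $V_C$ is a matching. Hence $G_\phi$ is triangle-free.

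Finally, the parameterization follows by observing that $G_\phi - V'$ is exactly the disjoint union of the paths $P_\mu$, hence a linear forest, so the distance to a linear forest is at most $|V'| = \mathcal{O}(n)$. A hypothetical $\mathcal{O}^*((q-\epsilon)^t)$ algorithm on triangle-free inputs would then run in time $\mathcal{O}^*((q-\epsilon)^n)$ on these instances, contradicting the SETH-based lower bound used in the original Jaffke--Jansen reduction. The only step that requires genuine attention is the triangle case analysis; everything else is bookkeeping on top of the cited theorem, so the main obstacle is really just confirming that the matching (rather than complete-bipartite) attachment in the existing construction is what guarantees triangle-freeness.
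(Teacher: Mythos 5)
Your proposal is correct and follows essentially the same route as the paper: both simply inspect the original Jaffke--Jansen construction and observe that it is already triangle-free (independent set $V'$, vertex-disjoint paths $P_\mu$, attachments via matchings), so the stated lower bound carries over verbatim to triangle-free inputs; your explicit three-case triangle analysis just fleshes out what the paper sketches in one sentence. The only minor imprecision is the closing accounting ($|V'|=\calO(n)$ giving $\Ostar((q-\epsilon)^n)$): the precise relation between $|V'|$ and $n$ (and hence the SETH contradiction) is the one established in the original reduction, which you in any case defer to, so nothing essential is affected.
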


\begin{theorem}\label{thm:cliquecolgeq3}
	For any $\epsilon > 0$ and any fixed $q\geq 3$, $q$-{\sc Clique Coloring} parameterized by the distance $t$ to a caterpillar forest
	cannot be solved in time $\Ostar((q-\epsilon)^t)$,
	unless \SETH fails. 
\end{theorem}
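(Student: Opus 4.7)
My plan is to reduce from \textsc{$q$-List Coloring} on triangle-free graphs parameterized by the distance to a linear forest, for which Theorem~\ref{thm:qlistcolbound} provides the matching $\Ostar((q-\epsilon)^t)$ lower bound under \SETH. Given such an instance $(G, L)$ with a deletion set $D$ of size $t$ and $G-D$ a linear forest, I will construct a graph $G'$ that has a clique coloring with $q$ colors if and only if $(G, L)$ is a yes-instance, and whose distance to a caterpillar forest is $t + O_q(1)$. Since $q$ is fixed, this additive overhead is absorbed by the $\Ostar$-notation and the reduction transfers the hardness.

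The main ingredient is a \emph{color-palette} gadget $P_q$: a triangle-free graph of $O_q(1)$ size with $q$ distinguished independent vertices $p_1, \ldots, p_q$ that are forced to receive pairwise distinct colors in any proper $q$-coloring of $P_q$. Since $P_q$ is triangle-free, its clique colorings coincide with its proper colorings. Such a gadget can be obtained from a classical triangle-free uniquely $q$-colorable graph (constructible via Mycielski-style iterations) by selecting $p_1, \ldots, p_q$ as an independent transversal of the $q$ color classes, possibly after a small triangle-free augmentation to guarantee the existence of such a transversal. With $P_q$ in hand, $G'$ is defined as the disjoint union $G \cup P_q$ augmented, for each vertex $v \in V(G)$ and each forbidden color $c \in \{1,\ldots,q\} \setminus L(v)$, by a fresh pendant vertex $\ell_{v,c}$ with the edges $v\ell_{v,c}$ and $\ell_{v,c}p_{c'}$ for every $c' \neq c$; no other edges are added.

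Correctness follows from a sequence of maximal-clique checks. Given a clique coloring $\phi$ of $G'$ with $q$ colors, the palette property lets us assume $\phi(p_i) = i$ after relabeling. Because the $p_i$'s are independent in $P_q$, the pendants are pairwise non-adjacent, and no $G$-vertex is adjacent to a palette vertex, the edge $\{\ell_{v,c}, p_{c'}\}$ is maximal in $G'$ for every $c' \neq c$, forcing $\phi(\ell_{v,c}) \neq c'$ and thus $\phi(\ell_{v,c}) = c$. The edge $\{v, \ell_{v,c}\}$ is likewise maximal, so $\phi(v) \neq c$ and hence $\phi(v) \in L(v)$. Since $G$ is triangle-free and each pendant is incident to exactly one $G$-vertex, every $uv \in E(G)$ remains a maximal clique in $G'$, so $\phi|_{V(G)}$ is a proper list-coloring of $(G, L)$. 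Conversely, any proper list-coloring of $G$ extends to $G'$ by coloring $P_q$ canonically and assigning $\phi(\ell_{v,c}) = c$; a direct check shows that $G'$ is in fact triangle-free, so every maximal clique is an edge and the extension is automatically a clique coloring.

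Setting $D' = D \cup V(P_q)$ gives a deletion set of size $t + O_q(1)$, and $G' - D'$ is the linear forest $G - D$ with each pendant $\ell_{v,c}$ attached to $v$ as a leaf (when $v \notin D$) or left isolated (when $v \in D$), so $G' - D'$ is a caterpillar forest. The main obstacle is the construction of $P_q$: we need a constant-size triangle-free graph that simultaneously (i) forces pairwise distinct colors on an independent designated set of $q$ vertices and (ii) does not interact with the pendants in a way that destroys the maximality arguments above. Setting this up carefully, via Mycielski-style triangle-free uniquely $q$-colorable graphs together with an appropriate independent transversal of their color classes, is the delicate part of the proof.
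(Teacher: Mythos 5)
Your overall reduction is the same as the paper's: you reduce from \textsc{$q$-List Coloring} on triangle-free graphs parameterized by distance to a linear forest (Theorem~\ref{thm:qlistcolbound}), attach, for each vertex $v$ and each forbidden color $c \notin L(v)$, a pendant vertex adjacent to $v$ and to all but one of $q$ distinguished gadget vertices, use triangle-freeness to identify clique colorings with proper colorings, and account for the parameter by deleting the original deletion set plus the constant-size gadget, leaving a caterpillar forest. All of this matches the paper and your maximal-clique checks are sound.

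The genuine gap is the one ingredient you defer: the triangle-free gadget with $q$ independent vertices $p_1,\ldots,p_q$ that are forced to receive pairwise distinct colors in every proper $q$-coloring. Your proposed route --- ``a classical triangle-free uniquely $q$-colorable graph (constructible via Mycielski-style iterations)'' plus an independent transversal of its color classes --- does not work as stated: Mycielski's construction does not yield uniquely colorable graphs (it only raises the chromatic number while preserving triangle-freeness), and even granting the existence of triangle-free uniquely $q$-colorable graphs (a nontrivial theorem proved by quite different means), you give no argument that an \emph{independent} transversal of the color classes exists; the phrase ``possibly after a small triangle-free augmentation'' is exactly the step that needs a construction. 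The paper closes this with an explicit, weaker-than-unique-colorability gadget: take $q$ copies of $M_{q+1}$ with one edge $x_iy_i$ deleted; edge-criticality of $M_{q+1}$ forces $x_i$ and $y_i$ to share a color in any $q$-coloring; replace each $y_i$ by $q-1$ false twins $y_{ij}$ and add the edges $y_{ij}y_{ji}$ for $i<j$. This keeps the graph triangle-free (the $y_{ij}$ and $y_{ji}$ have disjoint neighborhoods), keeps the $x_i$ pairwise non-adjacent, and forces $\phi(x_i)\neq\phi(x_j)$, since $\phi(x_i)=\phi(x_j)$ would give $\phi(y_{ij})=\phi(y_{ji})$ on an edge. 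Until you supply a concrete gadget of this kind (or a correct existence argument for yours), the proof is incomplete precisely at its only delicate point.
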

\begin{proof}
  We give a reduction from $q$-{\sc List Coloring} on triangle-free graphs
  parameterized by distance to linear forest. In this proof we use the phrases
  ``\(q\)-colorable'' as short for ``can be properly colored with at most \(q\)
  colors'', and ``\(q\)-coloring'' as short for ``a proper coloring with at most
  \(q\) colors''.  To construct our instance of \qqcliquecol{q}, we will first
  describe the construction of a color selection gadget, and then describe how
  this gadget is attached to rest of the graph. The description of the color
  selection gadget makes use of the famous Mycielski graphs. For completeness,
  we briefly describe how Mycielski graphs are recursively constructed and some
  of their useful properties. For every $p\geq 2$, the Mycielski graph $M_p$ is
  a triangle-free graph with chromatic number $p$. For $p=2$, we define
  $M_2=K_2$. For $p\geq 3$, the graph $M_p$ is obtained from $M_{p-1}$ as
  follows. Let $V(M_{p-1})=\{v_1,\ldots,v_n\}$. Then
  $V(M_p)=V(M_{p-1})\cup\{u_1,\ldots,u_n,w\}$. The vertices of $V(M_{p-1})$
  induce a copy of $M_{p-1}$ in $M_p$, each $u_i$ is adjacent to all the
  neighbors of $v_i$ in $M_{p-1}$ and $N(w)=\{u_1,...,u_n\}$. Hence,
  $|V(M_p)|=3\cdot 2^{p-2}-1$. Moreover, it is known that $M_p$ is
  edge-critical, that is, the deletion of any edge of $M_p$ leads to a
  $(p-1)$-colorable graph (see for instance \cite{BondyMurty,Lovasz}). For our
  construction, we will use the graph $M_p'$, obtained from $M_p$ by the
  deletion of an arbitrary edge $xy$. The following observation follows directly
  from the fact that $M_p$ is edge-critical.

\begin{nestedobservation}\label{obs:samecolor}
Let $M_p'$ be the graph obtained from $M_p$ by the deletion of an edge $xy$. Then, $M_p'$ is $(p-1)$-colorable, and in any $(p-1)$-coloring of $M_p'$, the vertices $x$ and $y$ receive the same color.
\end{nestedobservation}

\noindent{\it Color selection gadget.} 
We construct a gadget $H_q$ in the following way. Consider $q$ disjoint copies of $M_{q+1}'$. For $1\leq i\leq q$, let $x_iy_i$ be the edge removed from $M_{q+1}$ in order to obtain the $i$th copy of $M_{q+1}'$. For each $i$, add $q-1$ false twins to $y_i$. We denote these vertices by $y_{ij}$, with $1\leq j\leq q$, $j\neq i$. Then delete the vertex $y_i$, for every $i$. Note that this graph is still $q$ colorable and, by Observation~\ref{obs:samecolor}, in every such $q$-coloring, for each $i$, the vertices $x_i$ and $y_{ij}$, for all $j\neq i$, receive the same color. Now we add $q\choose 2$ edges to connect the copies of $M_{q+1}'$: for $1\leq i<j\leq q$, add the edge $y_{ij}y_{ji}$ to $H_q$. Note that $H_q$ remains triangle-free after the addition of these edges, since for all $1\leq i<j\leq q$, $N(y_{ij})\cap N(y_{ji})=\emptyset$. We will need the following property of the $q$-colorings of $H_q$.

\begin{nestedclaim}\label{claim:colorxi}
The graph $H_q$ is $q$-colorable. Moreover, in any $q$-coloring $\phi$ of $H_q$, $\phi(x_i)\neq\phi(x_j)$ for all $1\leq i<j\leq q$.
\end{nestedclaim}

\begin{claimproof}
Suppose for a contradiction that there exists a $q$-coloring of $H_q$ such that $\phi(x_i)=\phi(x_j)$, for some $i\neq j$. By Observation~\ref{obs:samecolor}, we know that $\phi(x_i)=\phi(y_{ij})$. Similarly, $\phi(x_j)=\phi(y_{ji})$. This implies that $\phi(y_{ij})=\phi(y_{ji})$, which is a contradiction, since $y_{ij}$ and $y_{ji}$ are adjacent by construction. To see that a $q$-coloring indeed exists for $H_q$, first note that, by Observation~\ref{obs:samecolor}, each copy of $M_{q+1}'$ has a $q$-coloring in which $x_i$ and $y_i$ are assigned the same color. We can then permute the colors within a copy to obtain a proper coloring of that copy in which $x_i$ and $y_i$ receive color $i$. To complete the coloring, assign color $i$ to every $y_{ij}$ that is a false twin of $y_i$. This yields a proper $q$-coloring of $H_q$.
\end{claimproof}

We are now ready to describe the construction of our instance $G'$ to $q$-{\sc Clique Coloring}. Let $(G,L)$ be an instance of $q$-{\sc List Coloring} on triangle-free graphs that is at distance $k$ from a linear forest. We construct $G'$ as follows. Add a copy of $G$ and a copy of $H_q$ to $G'$. We denote by $V'$ the set of vertices corresponding to $V(G)$ in $G'$. For each $v\in V'$, add $q-|L(v)|$ vertices adjacent to $v$. We denote these vertices by $\{v_j~|~j\notin L(v)\}$. Finally, make $v_j$ adjacent to all the vertices of $\{x_\ell~|~\ell\neq j\}$. See Figure~\ref{fig:reduction2}.

\begin{figure}
\centering
\includegraphics[width=11cm]{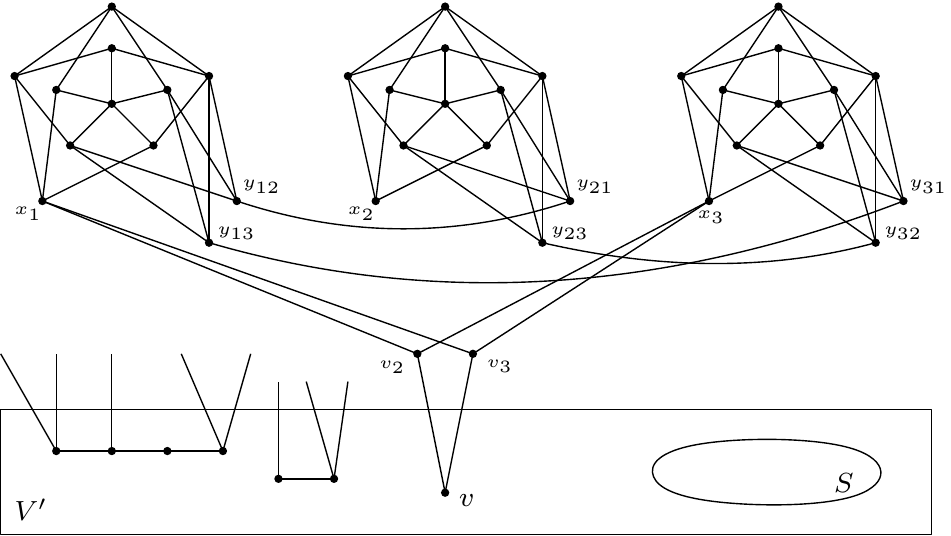}
\caption{In this instance, $q=3$ and $L(v)=\{1\}$. Note that $G'-(S\cup V(H_q))$ is a caterpillar forest.}
\label{fig:reduction2}
\end{figure}

Note that $G'$ is a triangle-free since $H_q$ and $G$ are triangle free, and $N(v_j)\cap V'=\{v\}$ and $N(v_j)\cap V(H_q)$ is an independent set. Furthermore, let $S\subseteq V(G)$ be a set such that $G-S$ is a linear forest and $|S|=t$. Then $S\cup V(H_q)$ is such that each connected component of $G'-(S\cup V(H_q))$ is a caterpillar and $|S\cup V(H_q)|=t+q(3\cdot 2^{q-1}+q-3)=t+\calO(1)$, since $q$ is a constant.

We will show that $(G,L)$ is a yes-instance to $q$-{\sc List Coloring} if and only if $G'$ is a yes-instance to $q$-{\sc Clique Coloring}. Note that since $G'$ is a triangle-free graph, every clique coloring of $G'$ is actually a proper coloring of it also. First, suppose $(G,L)$ is a yes-instance to $q$-{\sc List Coloring} and let $\phi$ be a $q$-list coloring for $G$.   We give a $q$-coloring $\phi'$ for $G'$ in the following way. If $v\in V'$, make $\phi'(v)=\phi(v)$. For each $v_j\in N(v)$, make $\phi'(v_j)=j$. Note that since $j\notin L(v)$, we have that $\phi'(v)\neq\phi(v_j)$. Finally, consider a proper $q$-coloring of $H_q$. By Claim~\ref{claim:colorxi}, the vertices $x_1,\ldots,x_q$ were assigned pairwise distinct colors. Without loss of generality, we can assume $x_i$ received color~$i$. Extend $\phi$ to the remaining vertices of $G'$ according to this coloring of $H_q$. This leads to proper $q$-coloring of $G'$, since $\phi(v_j)=j$ and $v_j$ is not adjacent to $x_j$.

Now assume $G'$ admits a $q$-clique coloring $\phi$. We will show that $\phi|_{V'}$ is a $q$-list coloring for $(G,L)$. Since $G'$ is triangle-free, it is clear that $\phi|_{V'}$ is a proper coloring of $G$. It remains to show it satisfies the constraints imposed by the lists. By Claim~\ref{claim:colorxi}, we can again assume that $\phi(x_i)=i$, for every $i$. For every $v\in V'$, since $\{x_\ell~|~\ell\neq j\}\subset N(v_j)$, we necessarily have $\phi(v_j)=j$. Finally, since for every $c\notin L(v)$ there is a neighbor of $v$ that is colored $c$ (namely $v_c$), we conclude that $\phi(v)\in L(v)$.

Now, suppose that \qcliquecol admits an algorithm running in time $\Ostar((q-\epsilon)^{t'})$, for some $\epsilon > 0$,
where $t'$ is the distance of the input graph to a caterpillar forest.
Then, we can solve \textsc{$q$-List-Coloring} paramterized by the distance $t$ to a linear forest
by applying the above reduction, giving a \qcliquecol instance at distance $t + \calO(1)$ to a caterpillar forest, 
and solving the resulting \qcliquecol instance.
Correctness is argued in the previous paragraphs, and the runtime of the resulting algorithm is
$\Ostar((q-\epsilon)^{t + \calO(1)}) = \Ostar((q-\epsilon)^t)$, 
contradicting \SETH by Theorem~\ref{thm:qlistcolbound}.
\end{proof}

Since the instance of \qcliquecol constructed in the proof of Theorem~\ref{thm:cliquecolgeq3} is a triangle-free graph, we obtain the following corollary.

\begin{corollary}
	For any $\epsilon > 0$ and any fixed $q\geq 3$, $q$-{\sc Coloring} on triangle-free graphs parameterized by the distance $t$ to a caterpillar forest
	cannot be solved in time $\Ostar((q-\epsilon)^t)$,
	unless \SETH fails. 
\end{corollary}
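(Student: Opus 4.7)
The plan is to derive this corollary as an almost immediate consequence of Theorem~\ref{thm:cliquecolgeq3}, by exploiting the well-known fact that on triangle-free graphs, clique colorings and proper colorings coincide. No genuinely new construction is needed; the work is just bookkeeping.

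First, I would re-examine the instance $G'$ built in the proof of Theorem~\ref{thm:cliquecolgeq3} and recall that its triangle-freeness is already verified there. The input graph $G$ is triangle-free by hypothesis, the color-selection gadget $H_q$ is triangle-free by construction (Mycielski graphs are triangle-free, and the added edges $y_{ij}y_{ji}$ connect vertices whose neighborhoods in $H_q$ are disjoint), and each added pendant $v_j$ has $N(v_j) \cap V' = \{v\}$ while $N(v_j) \cap V(H_q) \subseteq \{x_\ell \mid \ell \neq j\}$, which is an independent set in $H_q$. Hence no triangle can be created.

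Second, I would observe that in any triangle-free graph, every edge is a maximal clique: adding any third vertex to an edge would create a triangle, which cannot exist. Consequently the maximal cliques of size at least two are precisely the edges, and a coloring avoids a monochromatic maximal clique if and only if it avoids a monochromatic edge, i.e.\ it is proper. Thus on triangle-free inputs the decision problems $q$-\textsc{Clique Coloring} and $q$-\textsc{Coloring} are identical.

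Finally, I would combine these two observations with the reduction itself. The proof of Theorem~\ref{thm:cliquecolgeq3} already establishes that $(G, L)$ is a yes-instance of $q$-\textsc{List Coloring} on triangle-free graphs if and only if $G'$ is a yes-instance of $q$-\textsc{Clique Coloring}; by the equivalence above this is the same as asking whether $G'$ is a yes-instance of $q$-\textsc{Coloring}. Moreover, $G'$ is at distance $t + \mathcal{O}(1)$ from a caterpillar forest, where $t$ is the distance of $G$ to a linear forest. Therefore, an $\Ostar((q-\epsilon)^{t'})$-time algorithm for $q$-\textsc{Coloring} on triangle-free graphs parameterized by distance to a caterpillar forest would, via this reduction, yield an $\Ostar((q-\epsilon)^t)$-time algorithm for $q$-\textsc{List Coloring} on triangle-free graphs parameterized by distance to a linear forest, contradicting Theorem~\ref{thm:qlistcolbound}. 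There is essentially no obstacle here; the only thing to double-check is that the equivalence between clique coloring and proper coloring genuinely applies to $G'$, which is guaranteed by its triangle-freeness.
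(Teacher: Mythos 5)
Your proposal is correct and is essentially the paper's argument: the paper derives the corollary in one line by noting that the instance $G'$ built in the proof of Theorem~\ref{thm:cliquecolgeq3} is triangle-free, so clique colorings and proper colorings of $G'$ coincide and the same reduction transfers the lower bound to $q$-\textsc{Coloring}. Your additional verification of $G'$'s triangle-freeness and of the edge/maximal-clique equivalence just makes explicit what the paper already established inside that proof.
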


%!TEX root=clique-col.tex
\section{Parameterized by Clique-width}\label{sec:cliquewidth}
In this section, we give an $\XP$-time algorithm for \cliquecol parameterized by clique-width,
more precisely, parameterized by the equivalent measure module-width.
We provide an algorithm that given an $n$-vertex graph $G$ with one of its rooted branch decompositions of module-width $w$
and an integer $k$, decides whether $G$ has a clique coloring with $k$ colors in time $n^{f(w)}$, 
where $f(w) = 2^{2^{\calO(w)}}$.
Before we describe the algorithm, we give a high level outline of its main ideas, 
and where the double exponential dependence on $w$ in the degree of the polynomial comes from.

The algorithm is bottom-up dynamic programming along the given branch decomposition of the input graph.
Let $t$ be some node in the branch decomposition.
To keep the number of table entries bounded by something that is \XP in the module-width,
we have to find a way to group color classes into a number of \emph{types} that is upper bounded
by a function of $w$ alone.
The intention is that two color classes of the same type 
are interchangeable with respect to the underlying coloring being 
completable to a valid clique coloring of the whole graph.
Partial solutions (colorings of the subgraph $G_t$) can then be described by remembering,
for each type, how many color classes of that type there are.
If the number of types is $f(w)$ for some function $f$, 
this gives an upper bound of $n^{f(w)}$ on the number of table entries at each node of the branch decomposition.

Let us discuss what kind of information goes into the definition of a type.
Since the final coloring of $G$ has to avoid monochromatic maximal cliques,
we maintain information about cliques in $G_t$ that are or may become 
monochromatic maximal cliques in some extension of the coloring at hand.
A natural attempt would be to consider and describe maximal cliques in $G_t$ 
by their intersection patterns with the equivalence classes of $\sim_t$.
However, it is not sufficient to consider only maximal cliques in $G_t$;
given a maximal clique $X$ in $G_t$,
it may happen that in $\overline{V_t}$ there is a vertex $v$ 
that is adjacent to a strict subset $Y \subset X$ of that clique,
forming a maximal clique with $Y$ -- which does not fully contain $X$ --
in a supergraph of $G_t$.
Considering the equivalence classes of $\sim_t$,
this implies that the equivalence classes containing $Y$ and the ones containing $X \subset Y$ are disjoint.
We therefore consider cliques $X$ that are 
\emph{maximal in the subgraph induced by the equivalence classes containing vertices of $X$}.
We call such cliques $X$ \emph{\eqcmax}, 
and observe that with a little extra information,
we can keep track of the forming and disintegrating of \eqcmax cliques along the branch decomposition.
If an \eqcmax clique is fully contained in some set of vertices (/color class) $C$, 
then we call it \emph{potentially bad for $C$}.
A potentially bad clique is described via its \emph{profile}, which consists of the intersection pattern
with the equivalence classes of $\sim_t$, and some extra information.
At each node, there are at most $2^{\calO(w)}$ profiles.

Equipped with this definition, we can define the notion of a $t$-type of a color class $C$,
which is simply the subset of profiles at $t$, such that $G_t$ contains a potentially bad clique with that $C$-profile.
It immediately follows that the number of $t$-types is $2^{2^{\calO(w)}}$.
Now, colorings $\calC_t$ of $G_t$ are described by their \emph{$t$-signature}, 
which records how many color classes of each type $\calC_t$ has.
There are at most $k^{f(w)}$ many $t$-signatures, where $f(w) = 2^{2^{\calO(w)}}$,
and this essentially bounds the runtime of the resulting algorithm to 
$n\cdot k^{f(w)} = n^{\calO(f(w))}$.

At the root node $\rootnode \in V(T)$, 
there is only one equivalence class, namely $V_{\rootnode} = V(G)$,
and if in a coloring, there is a clique that is potentially bad for some color class,
then it is indeed a monochromatic maximal clique.
Therefore, at the root node, we only have to check whether there is a coloring all of whose color classes have no
potentially bad cliques.

\newcommand\calS{\ensuremath\mathcal{S}}
\newcommand\insiders{\ensuremath\mathcal{Q}}
\newcommand\outsiders{\ensuremath\mathcal{P}}
\newcommand\bubblebuddies{\ensuremath\mathsf{bb}}
\newcommand\prmergeaux{H'}
\subsection{Potentially Bad Cliques}
We now introduce the main concept used to describe color classes in partial solutions of our algorithms, 
namely potentially bad cliques. 
These are cliques that are monochromatic in some subgraph induced by a set of equivalence classes.
\begin{definition}[Potentially Bad Clique]
	Let $G$ be a graph with rooted branch decomposition $(T, \decf)$ and let $t \in V(T)$.
	A clique $X$ in $G_t$ is called \emph{\eqcmax (in $G_t$)} if it is maximal in $G_t[V(\eqc_t(X))]$.
	Let $C \subseteq V_t$ and let $X$ be a clique in $G_t$.
	Then, $X$ is called \emph{potentially bad for $C$ (in $G_t$)}, if $X$ is \eqcmax in $G_t$ and $X \subseteq C$.
\end{definition}

Naturally, it is not feasible to keep track of \emph{all} potentially bad cliques.
We therefore capture the most vital information about potentially bad cliques in the following 
notion of a \emph{profile}.
For our algorithm, it is only important to know for a color class whether or not it has 
some potentially bad clique with a given profile, rather than how many, or what its vertices are.
This is key to reduce the amount of information we need to store about partial solutions.

There are two components of a profile of a potentially bad clique $X$;
the first one is the set of equivalence classes $\insiders$ containing its vertices,
and the second one consists of the equivalence classes $P \notin \insiders$ 
that have a vertex that is complete to $X$.
This is because, at a later stage, $P$ may be merged with an equivalence class containing vertices of $X$ 
(via the bubbles), in which case $X$ is no longer potentially bad.
We illustrate the following definition in Figure~\ref{fig:ccol:profile}.
\begin{figure}
	\centering
	\includegraphics[height=.1\textheight]{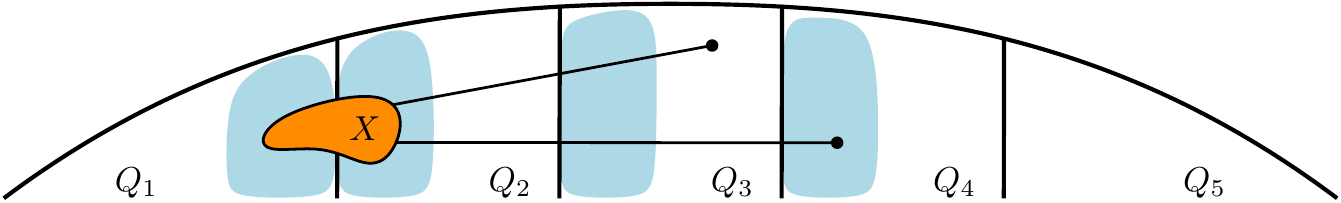}
	\caption{Illustration of the $C$-profile of a clique $X$ that is potentially bad for a color class $C$,
		depicted as the shaded areas within the equivalence classes.
		In this case, we have that $\profile(X \mid C) = (\{Q_1, Q_2\}, \{Q_3, Q_4\})$.}
	\label{fig:ccol:profile}
\end{figure}
\begin{definition}[Profile]
	Let $G$ be a graph with rooted branch decomposition $(T, \decf)$ and let $t \in V(T)$.
	Let $C \subseteq V_t$ and let $X$ be a clique in $G_t$ that is potentially bad for $C$.
	The \emph{$C$-profile} of $X$ is a pair of subsets of $V_t/{\sim_t}$,
	$\profile(X \mid C) \defeq (\insiders, \outsiders)$, where
	\begin{align*}
		\insiders = \eqc_t(X) \mbox{ and }
		\outsiders = \{P \in \eqcbar_t(X) \mid \exists v \in P \colon X \subseteq N(v)\}.
	\end{align*}
	We call the set of all pairs of disjoint subsets of $V_t/{\sim_t}$, 
	where the first coordinate is nonempty, 
	the \emph{profiles at $t$}, formally, 
	\(\profiles_t \defeq \{(\insiders, \outsiders) \mid
			\insiders, \outsiders \subseteq V_t/\sim_t \colon 
			\insiders \neq \emptyset \wedge
			\insiders \cap \outsiders = \emptyset\}.\)
\end{definition}

\begin{observation}\label{obs:ccol:number:of:profiles}
	Let $(T, \decf)$ be a rooted branch decomposition. For each $t \in V(T)$,
	there are at most $2^{\calO(\givenmw)}$ profiles at $t$, where $\givenmw = \modulew(T, \decf)$.
\end{observation}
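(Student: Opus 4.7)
The plan is to prove this observation by a direct counting argument exploiting the definition of module-width. By the definition of $\mw(T, \decf)$, at every node $t \in V(T)$ we have $\card{V_t/\sim_t} \le \givenmw$. A profile at $t$ is an ordered pair $(\insiders, \outsiders)$ of disjoint subsets of $V_t/\sim_t$ with $\insiders \neq \emptyset$, so specifying a profile amounts to classifying each equivalence class $Q \in V_t/\sim_t$ into one of three buckets: $Q \in \insiders$, $Q \in \outsiders$, or $Q$ in neither.

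First I would simply invoke the module-width bound to get at most $\givenmw$ equivalence classes, then observe that the number of maps from $V_t/\sim_t$ to the three-element set $\{\text{insider}, \text{outsider}, \text{neither}\}$ is at most $3^{\givenmw}$. Subtracting the single map that sends everything to $\{\text{outsider}, \text{neither}\}$ (the only way to violate $\insiders \neq \emptyset$) accounts for the nonempty-$\insiders$ requirement and does not affect the bound.

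Finally, I would rewrite $3^{\givenmw} = 2^{(\log_2 3) \cdot \givenmw} = 2^{\calO(\givenmw)}$ to match the claimed form. The whole argument is a one-line combinatorial count; I do not anticipate any obstacle, since the only fact used beyond the definition is the trivial identity $3^x = 2^{\Theta(x)}$.
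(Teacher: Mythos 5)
Your counting argument is correct and is exactly the intended justification: the paper states this as an observation without proof precisely because, with at most $\givenmw$ equivalence classes of $\sim_t$, classifying each class as insider, outsider, or neither gives at most $3^{\givenmw} = 2^{\calO(\givenmw)}$ profiles. One cosmetic slip: the pairs violating $\insiders \neq \emptyset$ are all $2^{\givenmw}$ assignments avoiding the insider bucket, not a single map, but since you only discard them the bound is unaffected.
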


Let $t \in V(T) \setminus \leaves(T)$ be an internal node with children $r$ and $s$ 
and operator $(\decaux_t, \bubblemap_r, \bubblemap_s)$,
and let $\profile_r \in \profiles_r$ and $\profile_s \in \profiles_s$ be a pair of profiles.
We are now working towards a notion that precisely captures 
when and how a potentially bad clique in $G_r$ for some $C_r \subseteq V_r$ with $C_r$-profile $\pr_r$ 
can be merged with a potentially bad clique in $G_s$ for some $C_s \subseteq V_s$ with $C_s$-profile $\pr_s$
to obtain a potentially bad clique for $C_r \cup C_s$ in $G_t$.
As it turns out, if this is possible, then the profile of the resulting clique 
only depends on $\pr_r$, $\pr_s$, and the operator of $t$.
Note that for now, we focus on the case when the cliques in $G_r$ and $G_s$ are both nonempty,
and we discuss the case when one of them is empty below. 

Before we proceed with this description, we need to introduce some more concepts.
We illustrate all of the following concepts in Figure~\ref{fig:bb:comp:merge-profile}.
\begin{figure}
	\centering
	\includegraphics[height=.25\textheight]{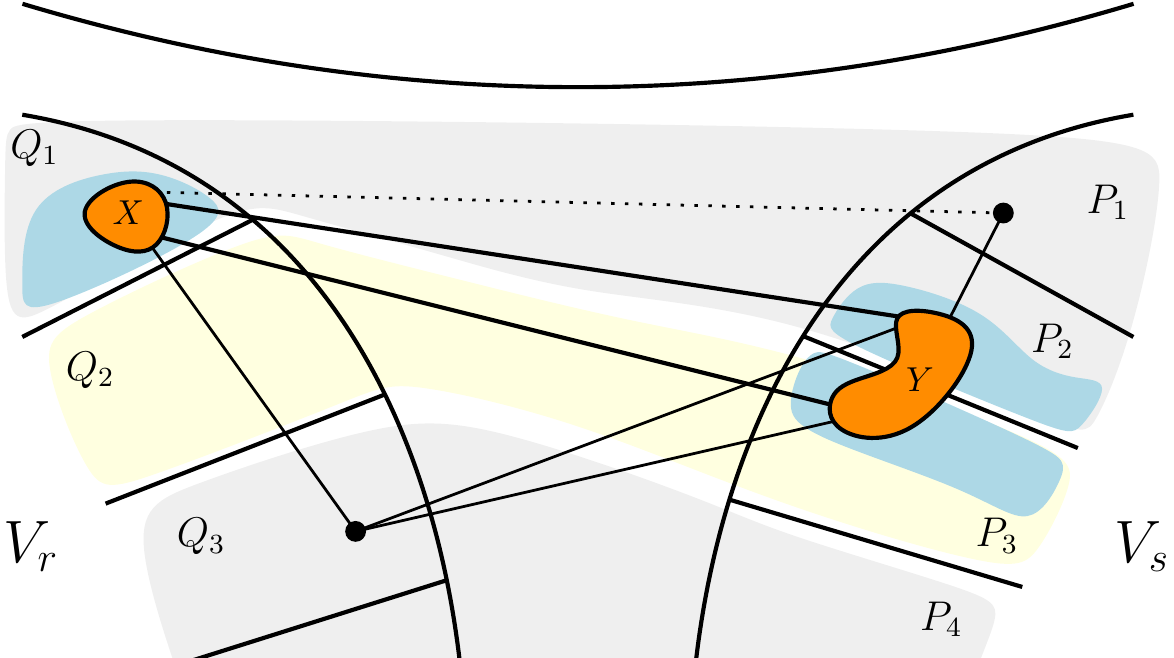}
	\caption{Merging a potentially bad clique $X$ in $G_r$ with a potentially bad clique $Y$ in $G_s$ to obtain 
		a potentially bad clique in $G_t$.
		The color class at hand is depicted in blue and the gray and yellow areas show the (three) bubbles.
		Note that the equivalence classes $P_1$ and $Q_2$ are bubble buddies of $\eqc_r(X)$ and $\eqc_s(Y)$.
		Moreover, the types of $X$ and $Y$ are compatible, 
		since $\{Q_1, P_2, P_3\}$ is a maximal biclique in $\decaux_t[\{Q_1, P_1, P_2, P_3\}]$.
		Finally, note that the equivalence class of $\sim_t$ corresponding to the bubble containing $Q_3$
		will have a vertex that is complete to the potentially bad clique $X \cup Y$.
		}
	\label{fig:bb:comp:merge-profile}
\end{figure}
For a set of equivalence classes $\calS \subseteq V_r/{\sim_r} \cup V_s/{\sim_s}$, its \emph{bubble buddies at $t$}, 
denoted by $\bubblebuddies_t(\calS)$, are the equivalence classes of $V_r/{\sim_r} \cup V_s/{\sim_s}$ 
that are in the same bubble as some equivalence class in~$\calS$:
$$\bubblebuddies_t(\calS) \defeq \{Q_p \mid p \in \{r, s\}, \bubblemap_p(Q_p) \in \bubblemap_r(\calS \cap V_r/{\sim_r}) \cup \bubblemap_s(\calS \cap V_s/{\sim_s})\}.$$
We say that $\pr_r = (\insiders_r, \outsiders_r)$ and $\pr_s = (\insiders_s, \outsiders_s)$ are \emph{compatible}, if
$\insiders_r \cup \insiders_s$ is a maximal biclique in 
\begin{align}
	\label{eq:ccol:pr:compatibility}
	\prmergeaux_t(\pr_r, \pr_s) \defeq \decaux_t[(\insiders_r \cup \insiders_s) \cup ((\outsiders_r \cup \outsiders_s) 
		\cap \bubblebuddies_t(\insiders_r \cup \insiders_s))].
\end{align}
As we show below, the notion of compatibility precisely captures the `merging behavior' of potentially  bad cliques.
Moreover, for $\pr_r$ and $\pr_s$ compatible, we can immediately construct the profile of the resulting potentially bad clique:
the \emph{merge profile} of $\pr_r$ and $\pr_s$ is the profile 
$\mergeprofile(\pr_r, \pr_s) = (\insiders_t, \outsiders_t)$ such that 
\begin{itemize}
	\item $\insiders_t = \bubblemap_r(\insiders_r) \cup \bubblemap_s(\insiders_s)$ and
	\item $\outsiders_t = \bigcup_{\{o, p\} = \{r, s\}} \{\bubblemap(Q_p) \mid 
			Q_p \in \outsiders_p \setminus \bubblebuddies_t(\insiders_r \cup \insiders_s) \colon 
			\insiders_o \subseteq N_{\decaux_t}(Q_p)\}$.
\end{itemize}
\begin{lemma}\label{lem:ccol:merge:profile:bt}
	Let $t \in V(T) \setminus \leaves(T)$ be an internal node with children $r$ and $s$ 
	and operator $(\decaux_t, \bubblemap_r, \bubblemap_s)$.
	For all $p \in \{r, s\}$, let $C_p \subseteq V_p$, 
	let $X_p$ be a clique in $G_r$ that is potentially bad for $C_p$, 
	and let $\pr_p \defeq \profile(X_p \mid C_p) = (\insiders_p, \outsiders_p)$.
	If $\pr_r$ and $\pr_s$ are compatible, 
	then $X_t \defeq X_r \cup X_s$ is a clique that is potentially bad for $C_t \defeq C_r \cup C_s$,
	and $\profile(X_t \mid C_t) = \mergeprofile(\pr_r, \pr_s)$.
\end{lemma}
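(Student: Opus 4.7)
The plan is to decompose the claim into three parts: (a)~$X_t := X_r\cup X_s$ is a clique in $G_t$ with $X_t\subseteq C_t$, (b)~$X_t$ is \eqcmax in $G_t$, and (c)~$\profile(X_t\mid C_t)=\mergeprofile(\pr_r,\pr_s)$. Parts (a) and the $\insiders$-component of (c) are straightforward bookkeeping; the real work is in (b) and in matching the $\outsiders$-component, both of which will invoke the maximal-biclique condition built into the definition of compatibility.

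For (a), edges inside $X_r$ and inside $X_s$ are inherited from $G_r,G_s\subseteq G_t$. For a cross edge $uv$ with $u\in X_r$, $v\in X_s$, the classes $[u]_r\in\insiders_r$ and $[v]_s\in\insiders_s$ are endpoints of the biclique $\insiders_r\cup\insiders_s$ in $\decaux_t$, so $\{[u]_r,[v]_s\}\in E(\decaux_t)$ and hence $uv\in E(G_t)$ by the operator definition. The containment $X_t\subseteq C_t$ is immediate. Also $\eqc_t(X_t)=\bubblemap_r(\insiders_r)\cup\bubblemap_s(\insiders_s)$, which is the $\insiders$-coordinate of $\mergeprofile(\pr_r,\pr_s)$.

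The main obstacle is (b). I will proceed by contradiction: suppose $w\in V(\eqc_t(X_t))\setminus X_t$ is complete to $X_t$ in $G_t$. By symmetry assume $w\in V_r$ and set $Q:=[w]_r$. If $Q\in\insiders_r$, then because the edges of $G_t$ within $V_r$ coincide with $E(G_r)$, $w$ would be complete to $X_r$ inside $G_r[V(\insiders_r)]$, contradicting the fact that $X_r$ is \eqcmax in $G_r$. Otherwise $Q\notin\insiders_r$; from $\bubblemap_r(Q)\in\eqc_t(X_t)$ it follows that $Q$ shares a bubble with some class in $\insiders_s$, so $Q\in\bubblebuddies_t(\insiders_r\cup\insiders_s)$. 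Since $w$ is complete to $X_r$ in $G_r$ and $w\notin X_r$, we get $Q\in\outsiders_r$. Finally, $w$ being complete to $X_s$ in $G_t$ forces $Q$ to be $\decaux_t$-adjacent to every class of $\insiders_s$. Thus $Q$ is a vertex of $\prmergeaux_t(\pr_r,\pr_s)$ that can be added to the $V_r/{\sim_r}$-side of the biclique $\insiders_r\cup\insiders_s$, contradicting the maximality of this biclique in $\prmergeaux_t(\pr_r,\pr_s)$ guaranteed by compatibility.

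For the $\outsiders$-component of (c) I will prove both inclusions. Take $P\in V_t/{\sim_t}\setminus\eqc_t(X_t)$ with a witness $v\in P$ complete to $X_t$ in $G_t$; by symmetry $v\in V_r$, $Q:=[v]_r$. Repeating the case analysis of (b) yields $Q\in\outsiders_r$ and $\insiders_s\subseteq N_{\decaux_t}(Q)$, while $P=\bubblemap_r(Q)\notin\eqc_t(X_t)$ forces $Q\notin\bubblebuddies_t(\insiders_r\cup\insiders_s)$, so $P$ lies in the set prescribed by $\mergeprofile$. Conversely, for any $Q\in\outsiders_r\setminus\bubblebuddies_t(\insiders_r\cup\insiders_s)$ with $\insiders_s\subseteq N_{\decaux_t}(Q)$, the defining witness $v\in Q$ complete to $X_r$ is automatically complete to $X_s$ through the operator, hence to $X_t$, and $P:=\bubblemap_r(Q)$ is outside $\eqc_t(X_t)$ because $Q$ is not a bubble buddy of $\insiders_r\cup\insiders_s$. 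The $s$-side case is symmetric, which completes the identification of $\profile(X_t\mid C_t)$ with $\mergeprofile(\pr_r,\pr_s)$.
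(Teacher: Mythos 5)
Your proof is correct and takes essentially the same route as the paper's: cross-edges of $X_r\cup X_s$ from the biclique condition, \eqcmax-ness by contradiction against the maximality of $\insiders_r\cup\insiders_s$ in $\prmergeaux_t(\pr_r,\pr_s)$, and a two-inclusion verification that the resulting profile equals $\mergeprofile(\pr_r,\pr_s)$. (One harmless imprecision: in part (b), $\bubblemap_r(Q)\in\eqc_t(X_t)$ only shows that $Q$ shares a bubble with some class of $\insiders_r\cup\insiders_s$, not necessarily of $\insiders_s$; the conclusion $Q\in\bubblebuddies_t(\insiders_r\cup\insiders_s)$, which is all you use, still holds.)
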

\begin{proof}
	We first argue that $X_t$ is a clique.
	Since $X_r$ and $X_s$ are cliques, we only have to show that for each $v_r \in X_r$ and $v_s \in X_s$, $v_r v_s \in E(G_t)$.
	In other words, if $Q_r$ is the equivalence class of~$\sim_r$ containing~$v_r$, 
	and $Q_s$ is the equivalence class of~$\sim_s$ containing $v_s$, then $Q_r Q_s \in E(\decaux_t)$.
	Now, $Q_r \in \eqc_r(X_r) = \insiders_r$ and $Q_s \in \eqc_s(X_s) = \insiders_s$,
	and since $\pr_r$ and $\pr_s$ are compatible, we have that $\insiders_r \cup \insiders_s$ is a biclique in $\decaux_t$,
	therefore $Q_r Q_s \in E(\decaux_t)$.
	
	Next, we show that $X_t$ is potentially bad for $C_t$.
	Since $X_r$ and $X_s$ are potentially bad for $C_r$ and $C_s$, respectively, 
	we have that $X_r \subseteq C_r$ and $X_s \subseteq C_t$, and therefore $X_t = X_r \cup X_s \subseteq C_r \cup C_s = C_t$.
	It remains to show that $X_t$ is \eqcmax.
	Suppose not, and let $y \in V(\eqc_t(X_t))$ be a vertex that is complete to $X_t$.
	First, we know that $y \notin V(\eqc_r(X_r) \cup \eqc_s(X_s))$, 
	for if $y \in V(\eqc_p(X_p))$ for some $p \in \{r, s\}$, then $X_p$ is not \eqcmax,
	contradicting $X_p$ being potentially bad for $C_p$.
	On the other hand, we have that 
	$\eqc_t(X_t) = \bubblebuddies_t(\eqc_r(X_r) \cup \eqc_s(X_s)) = \bubblebuddies_t(\insiders_r \cup \insiders_s)$.
	We may assume that for some $p \in \{r, s\}$, 
	the vertex $y$ is contained in some 
	$Q_p \in \bubblebuddies_t(\insiders_r \cup \insiders_s) \setminus (\insiders_r \cup \insiders_s)$.
	Assume up to renaming that $p = r$.
	Since $y$ is complete to $X_t$, we have that $y$ is complete to $X_r$, and therefore $Q_r \in \outsiders_r$.
	In other words, $Q_r$ is contained in the graph $\prmergeaux_t(\pr_r, \pr_s)$ as described in Equation~\eqref{eq:ccol:pr:compatibility}. 
	Moreover, since $y$ is complete to $X_s$, we have that $Q_r$ is complete to $\eqc_s(X_s) = \insiders_s$.
	This implies that $\{Q_r\} \cup \insiders_r \cup \insiders_s$ is a biclique in $\prmergeaux_t(\pr_r, \pr_s)$,
	contradicting $\pr_r$ and $\pr_s$ being compatible.
	
	To conclude the proof, we need to show that $\profile(X_t \mid C_t) = \mergeprofile(\pr_r, \pr_s)$.
	Let $\mergeprofile(\pr_r, \pr_s) = (\insiders_t, \outsiders_t)$.
	We first show that $\eqc_t(X_t) = \insiders_t$.
	To see that $\insiders_t = \bubblemap_r(\insiders_r) \cup \bubblemap_s(\insiders_s) \subseteq \eqc_t(X_t)$,
	we observe that for all $Q_p \in \insiders_p$, there is an $x \in X_p \cap Q_p$.
	This means that $x \in \bubblemap_p(Q_p)$, therefore $X_t \cap \bubblemap_p(Q_p) \neq \emptyset$ and $\bubblemap_p(Q_p) \in \eqc_t(X_t)$.
	The other inclusion can be argued similarly.
	
	Now suppose that $Q_t \in \outsiders_t$.
	Then, $Q_t = \bubblemap_p(Q_p)$ for some $Q_p \in \outsiders_p \setminus \bubblebuddies_t(\insiders_r \cup \insiders_s)$ 
	with $\insiders_o \subseteq N_{\decaux_t}(Q_p)$.
	In other words, there is a vertex $v \in Q_p$ that is complete to $X_t$, 
	and $\bubblemap_p(Q_p) \notin \eqc_t(X_t)$.
	According to the definition of a profile, 
	$Q_t = \bubblemap_p(Q_p)$ is contained in the second coordinate of $\pr_t$.
	The other inclusion can be shown similarly.
\end{proof}

Now we show the other direction, i.e.\ that if we have a potentially bad clique for some $C_t \subseteq V_t$ in $G_t$,
then its restrictions to $V_r$ and $V_s$ necessarily also form potentially bad cliques for the restriction of $C_t$ to
$V_r$ and $V_s$ in $G_r$ and $G_s$, respectively.
Furthermore, in that case, the profiles of the resulting cliques are compatible.
\begin{lemma}\label{lem:ccol:merge:profile:tb}
		Let $t \in V(T) \setminus \leaves(T)$ be an internal node with children $r$ and $s$ 
		and operator $(\decaux_t, \bubblemap_r, \bubblemap_s)$.
		Let $C_t \subseteq V_t$, and let $X_t$ be a clique in $G_t$ that is potentially bad for $C_t$.
		For all $p \in \{r, s\}$, let $X_p \defeq X_t \cap V_p$ and $C_p \defeq C_t \cap V_p$.
		Suppose that for all $p \in \{r, s\}$, $X_p \neq \emptyset$.
		Then, for all $p \in \{r, s\}$, $X_p$ is a potentially bad clique for $C_p$,
		and $\profile_r \defeq \profile(X_r \mid C_r)$ and $\profile_s \defeq \profile(X_s \mid C_s)$ are compatible.
\end{lemma}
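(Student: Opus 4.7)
The plan is to reduce each of the two assertions to the hypothesis that $X_t$ is \eqcmax in $G_t$. For each $p\in\{r,s\}$, that $X_p$ is a clique in $G_p$ contained in $C_p$ is immediate from $X_p=X_t\cap V_p$ together with the operator identity $E(G_t)=E(G_r)\cup E(G_s)\cup F$ (the edges of $F$ go between $V_r$ and $V_s$, so the $G_t$-edges inside $V_p$ are $G_p$-edges) and $C_p=C_t\cap V_p$. The substantive work is therefore (a) showing that $X_p$ is \eqcmax in $G_p$, and (b) showing that $\profile_r$ and $\profile_s$ are compatible. Both I plan to handle by contradiction with the \eqcmax-ness of $X_t$ in $G_t$.

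For (a), suppose some $y\in V(\eqc_p(X_p))\setminus X_p$ is complete to $X_p$ in $G_p$. Let $Q_y$ be the $\sim_p$-class of $y$ and pick any $x\in Q_y\cap X_p$, which exists by definition of $\eqc_p(X_p)$. The key observation is that $\sim_p$ is defined via $G$-neighborhoods in $\overline{V_p}$, and $V_o\subseteq\overline{V_p}$ where $o\in\{r,s\}\setminus\{p\}$. Since $x\in X_t$ is adjacent in $G_t$ to every vertex of $X_o\subseteq V_o$, $x\sim_p y$ forces $y$ to be adjacent in $G_t$ to every vertex of $X_o$ as well; combined with completeness to $X_p$, this makes $y$ complete to $X_t$. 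Finally, $\bubblemap_p(Q_y)\in\eqc_t(X_t)$ because $x\in X_t\cap\bubblemap_p(Q_y)$, and $V_r\cap V_s=\emptyset$ together with $y\notin X_p$ forces $y\notin X_t$. Thus $y$ witnesses that $X_t$ is not \eqcmax in $G_t$, a contradiction.

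For (b), that $\insiders_r\cup\insiders_s$ is a biclique in $\decaux_t$ is easy: for any $Q_r\in\insiders_r$ and $Q_s\in\insiders_s$, pick witnesses $x_r\in X_r\cap Q_r$ and $x_s\in X_s\cap Q_s$; the edge $x_rx_s\in E(G_t)$ must come from $F$, forcing $Q_rQ_s\in E(\decaux_t)$. For maximality, suppose some $Q^*\in V(\prmergeaux_t(\profile_r,\profile_s))\setminus(\insiders_r\cup\insiders_s)$ extends the biclique. Unpacking~\eqref{eq:ccol:pr:compatibility}, $Q^*$ lies in some $\outsiders_p$ and in $\bubblebuddies_t(\insiders_r\cup\insiders_s)$; assume by symmetry $p=r$, and let $v\in Q^*$ be the vertex witnessing $Q^*\in\outsiders_r$, so $v$ is complete to $X_r$. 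The extended-biclique condition gives $Q^*Q_s\in E(\decaux_t)$ for every $Q_s\in\insiders_s$, and hence $v$ is adjacent in $G_t$ to every vertex of $\bigcup_{Q_s\in\insiders_s}Q_s\supseteq X_s$, making $v$ complete to $X_t$. The bubble-buddy condition places $\bubblemap_r(Q^*)$ inside $\eqc_t(X_t)$, so $v\in V(\eqc_t(X_t))$; and $v\notin X_t$ because $Q^*\notin\insiders_r=\eqc_r(X_r)$ forces $v\notin X_r$, while $v\in V_r$ is disjoint from $X_s\subseteq V_s$. Thus $v$ again contradicts $X_t$ being \eqcmax.

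The main care I expect is the bookkeeping in both steps: in each contradiction the offending vertex must be simultaneously verified to lie in $V(\eqc_t(X_t))$, to lie outside $X_t$, and to be complete to all of $X_t$. The nontrivial completeness-to-the-other-side (to $X_o$ in (a), to $X_s$ in (b)) is exactly what the definition of $\sim_p$ and the operator's all-or-nothing bipartite edge rule are designed to deliver; once these are invoked, the proof is essentially the converse of Lemma~\ref{lem:ccol:merge:profile:bt}.
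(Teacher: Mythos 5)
Your proof is correct and follows essentially the same route as the paper: both parts are reduced to a contradiction with $X_t$ being \eqcmax, with the same witnesses (a vertex of $V(\eqc_t(X_t))\setminus X_t$ complete to $X_t$). The only cosmetic difference is that in part (a) you transfer adjacency to the other side via the definition of $\sim_p$ (equal neighborhoods in $\overline{V_p}$), whereas the paper phrases the same step through the biclique $\eqc_r(X_r)\cup\eqc_s(X_s)$ in $\decaux_t$.
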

\begin{proof}
	Since $X_t$ is a potentially bad clique for $C_t$, we have that $X_t \subseteq C_t$, 
	and so for $p \in \{r, s\}$, $X_p \subseteq C_p$.
	It remains to show that $X_p$ is \eqcmax for all $p \in \{r, s\}$.
	Up to renaming, it suffices to show that $X_r$ is \eqcmax.
	Suppose not and let $y \in \eqc_r(X_r)$ be a vertex that is complete to $X_r$.
	Since $X_t$ is a clique in $G_t$, we have that $\eqc_r(X_r) \cup \eqc_s(X_s)$ is a biclique in $\decaux_t$.
	Therefore, $y$ is also complete to $X_s$ and therefore to $X_t$.
	Clearly, $y \in \eqc_t(X_t)$, and we have a contradiction with $X_t$ being \eqcmax.
	
	What remains to be shown is that $\pr_r = (\insiders_r, \outsiders_r)$ and $\pr_s = (\insiders_s, \outsiders_s)$
	are compatible.
	We have already argued that $\insiders_r \cup \insiders_s = \eqc_r(X_r) \cup \eqc_s(X_s)$ is a biclique in $\decaux_t$;
	we have to show that $\insiders_r \cup \insiders_s$ is a maximal biclique in $\prmergeaux_t \defeq \prmergeaux_t(\pr_r, \pr_s)$ 
	as defined in Equation~\eqref{eq:ccol:pr:compatibility}.
	Clearly, $\insiders_r \cup \insiders_s \subseteq V(\prmergeaux_t)$,
	so suppose that $\insiders_r \cup \insiders_s$ is not a maximal biclique in $\prmergeaux_t$.
	This means that for some $p \in \{r, s\}$, 
	there is some $Q_p \in \outsiders_p \cap \bubblebuddies_t(\insiders_r \cup \insiders_s)$ 
	such that $\{Q_p\} \cup \insiders_r \cup \insiders_s$ is a biclique in $\prmergeaux_t$.
	In that case, there is a vertex $y \in Q_p$ that is complete to $X_t$ 
	(since $Q_p \in \outsiders_p$ and $\{Q_p\} \cup \insiders_r \cup \insiders_s$ is a biclique), 
	and $y \in V(\eqc_t(X_t))$ (since $Q_p \in \bubblebuddies_t(\insiders_r \cup \insiders_s)$);
	we obtained a contradiction with $X_t$ being \eqcmax.
\end{proof}

As mentioned above, we treat the case when a clique $X_p$ in one of the children $p \in \{r, s\}$
remains potentially bad in $G_t$ separately. This is because in that case,
the notion of a maximal biclique in $\prmergeaux_t$ as defined in Equation~\eqref{eq:ccol:pr:compatibility}
does not hold up very naturally.
We formulate the analogous requirements for this case here,
and we skip some of the details.

Let $t \in V(T) \setminus \leaves(T)$ be an internal node with children $r$ and $s$
and operator $(\decaux_t, \bubblemap_r, \bubblemap_s)$.
Let $\pr_r \in \profiles_r$. We say that $\pr_r$ is \emph{liftable} if 
\begin{itemize}
	\item there is no $Q_s \in \bubblebuddies_t(\insiders_r)$ 
		that is complete to $\insiders_r$ in $\decaux_t$, and
	\item $\bubblebuddies_t(\insiders_r) \cap \outsiders_r = \emptyset$.
\end{itemize}
The \emph{lift profile} of $\pr_r$, denoted by $\liftprofile(\pr_r)$, 
is constructed as the merge profile of $\pr_r$ with the empty set;
i.e.\ we take $(\insiders_s, \outsiders_s) = (\emptyset, V_s/{\sim_s})$ and apply the definition given above.
\begin{lemma}\label{lem:ccol:lift:profile}
	Let $t \in V(T) \setminus \leaves(T)$ be an internal node with children $r$ and $s$.
	Let $C_r \subseteq V_r$, $C_s \subseteq V_s$, 
	let $X_r$ be a clique in $G_r$, and let $\pr_r \defeq \profile(X_r \mid C_r)$.
	Then, $X_r$ is a potentially bad clique for $C_r \cup C_s$ in $G_t$ if and only if 
	$X_r$ is a potentially bad clique for $C_r$ in $G_r$ and $\pr_r$ is liftable,
	in which case $\profile_t(X_r \mid C_r \cup C_s) = \liftprofile(\pr_r)$.
\end{lemma}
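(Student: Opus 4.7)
The plan is to mirror the structure of Lemmas~\ref{lem:ccol:merge:profile:bt} and~\ref{lem:ccol:merge:profile:tb}, treating the situation where the ``$s$-side'' of the clique is empty. The key observation guiding the proof is that $\eqc_t(X_r) = \bubblemap_r(\insiders_r)$, since $X_r \subseteq V_r$, so the equivalence classes of $\sim_t$ that can threaten \eqcmax\ of $X_r$ in $G_t$ are exactly the classes whose bubble meets $\insiders_r$, i.e.\ $\bubblebuddies_t(\insiders_r)$. The two bullets defining liftability are tailored to rule out the two ways such a threat can materialize: a vertex on the $s$-side in condition~(1), and a vertex on the $r$-side outside $\insiders_r$ in condition~(2).

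For the ``only if'' direction I would assume $X_r$ is potentially bad for $C_r \cup C_s$ in $G_t$. Since $X_r \subseteq V_r$, it is a clique in $G_r$, and $X_r \subseteq (C_r \cup C_s) \cap V_r = C_r$. To show $X_r$ is \eqcmax\ in $G_r$, any witness $y \in V(\eqc_r(X_r))$ completing to $X_r$ would also lie in $V(\eqc_t(X_r))$ and complete to $X_r$ in $G_t$, contradicting \eqcmax\ in $G_t$. For liftability, if some $Q_s \in \bubblebuddies_t(\insiders_r)$ were complete to $\insiders_r$ in $\decaux_t$, then any $v \in Q_s$ sits in $\bubblemap_s(Q_s) \in \bubblemap_r(\insiders_r) = \eqc_t(X_r)$ and is complete to $X_r$, again contradicting \eqcmax\ in $G_t$; similarly a $Q_r \in \bubblebuddies_t(\insiders_r) \cap \outsiders_r$ would furnish a witness on the $r$-side.

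For the ``if'' direction I would take $X_r$ potentially bad for $C_r$ in $G_r$ with $\pr_r$ liftable. The containment $X_r \subseteq C_r \cup C_s$ and ``$X_r$ is a clique in $G_t$'' are immediate. Suppose for contradiction that some $y \in V(\eqc_t(X_r))$ is complete to $X_r$ in $G_t$. Let $Q_p$ be the $\sim_p$-class of $y$; then $\bubblemap_p(Q_p) \in \eqc_t(X_r) = \bubblemap_r(\insiders_r)$, so $Q_p \in \bubblebuddies_t(\insiders_r)$. If $p = s$, then $Q_s$ is complete to $\insiders_r$ in $\decaux_t$, violating liftability~(1). If $p = r$ with $Q_r \in \insiders_r$, the vertex $y$ witnesses that $X_r$ is not \eqcmax\ in $G_r$; while $Q_r \notin \insiders_r$ forces $Q_r \in \outsiders_r$, violating liftability~(2).

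For the profile equality, specialising the definition of $\mergeprofile$ to $(\insiders_s, \outsiders_s) = (\emptyset, V_s/{\sim_s})$ gives $\insiders_t = \bubblemap_r(\insiders_r)$, which matches $\eqc_t(X_r)$ by the observation above. For the second coordinate, the contribution of the $\{o,p\}=\{s,r\}$ branch collapses, by liftability~(2), to $\bubblemap_r(\outsiders_r)$, and these are exactly the $t$-classes meeting $V_r \setminus V(\insiders_r)$ that carry a witness complete to $X_r$; the $\{o,p\}=\{r,s\}$ branch picks up the classes $\bubblemap_s(Q_s)$ with $Q_s \notin \bubblebuddies_t(\insiders_r)$ and $Q_s$ complete to $\insiders_r$ in $\decaux_t$, which are exactly the $s$-side classes contributing to $\outsiders_t$ in $G_t$. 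I expect the main obstacle to be keeping the bookkeeping around $\bubblebuddies_t$ vs.\ $\eqc_t(X_r)$ straight; the clean resolution is the identity $\bubblemap_p(Q_p) \in \eqc_t(X_r) \iff Q_p \in \bubblebuddies_t(\insiders_r)$, which is used repeatedly.
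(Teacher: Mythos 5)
Your proposal is correct and takes exactly the route the paper intends: the paper omits this proof, remarking only that it follows the arguments of Lemmas~\ref{lem:ccol:merge:profile:bt} and~\ref{lem:ccol:merge:profile:tb} with ``liftable'' playing the role of compatibility with the profile of an empty set, and your write-up is precisely that specialization, carried out correctly via the identity $\bubblemap_p(Q_p) \in \eqc_t(X_r) \Leftrightarrow Q_p \in \bubblebuddies_t(\insiders_r)$ together with the two liftability conditions ruling out $s$-side and $r$-side witnesses. No gaps beyond slightly loose phrasing in the bookkeeping for the second coordinate of the lift profile, which your use of liftability~(2) already resolves.
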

\begin{proof}
	The proof can be done with very similar arguments to those given above and is therefore omitted.
	One only needs to observe that the notion of `liftable' modulates the notion of a 
	profile being compatible with the profile of an empty set.
\end{proof}

\subsection{The type of a color class}
We now describe the $t$-type of a color class $C$, which is the subset of profiles at $t$
such that there is a clique in $G_t$ that is potentially bad for $C$, with that $C$-profile.
For our algorithm, two color classes with the same type will be interchangeable,
therefore we only have to remember the number of color classes of each type.
\begin{definition}[$t$-Type]
	Let $G$ be a graph with rooted branch decomposition $(T, \decf)$, and let $t \in V(T)$.
	For a set $C \subseteq V_t$, the \emph{$t$-type} of $C$, denoted by $\ctype_t(C)$ is
	\begin{align*}
		\ctype_t(C) \defeq \{\pr_t \in \profiles_t \mid \exists \mbox { clique $X$ in $G_t$ which is potentially bad for $C$ and } \profile(X \mid C) = \pr_t\}.
	\end{align*}
	We call the set $\ctypes_t = 2^{\profiles_t}$ of all subsets of profiles at $t$ the \emph{$t$-types}.
\end{definition}

Since for each $t \in V(T)$, $\card{\profiles_t} \le 2^{\calO(w)}$ by Observation~\ref{obs:ccol:number:of:profiles}, 
the number of $t$-types can be upper bounded as follows.
\begin{observation}\label{obs:ccol:number:of:types}
	Let $(T, \decf)$ be a rooted branch decomposition, and let $t \in V(T)$.
	There are at most $2^{2^{\calO(w)}}$ many $t$-types, where $w \defeq \modulew(T, \decf)$.
\end{observation}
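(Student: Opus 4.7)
The plan is to obtain the bound by a direct two-step counting argument that simply composes the definitions with the previous observation on the number of profiles. The key point is that by definition $\ctypes_t = 2^{\profiles_t}$, so $\card{\ctypes_t} = 2^{\card{\profiles_t}}$, and hence any upper bound on $\card{\profiles_t}$ immediately lifts to a doubly-exponential bound on $\card{\ctypes_t}$.

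Concretely, I would first invoke Observation~\ref{obs:ccol:number:of:profiles}, which gives $\card{\profiles_t} \le 2^{\calO(w)}$. Then I would substitute this into the identity $\card{\ctypes_t} = 2^{\card{\profiles_t}}$ to conclude $\card{\ctypes_t} \le 2^{2^{\calO(w)}}$. Since a $t$-type of a color class $C$ is by definition a subset of $\profiles_t$, every $t$-type that arises as $\ctype_t(C)$ for some $C \subseteq V_t$ is counted among these $2^{\card{\profiles_t}}$ subsets, giving the claimed bound. There is no obstacle here; the statement is essentially a bookkeeping corollary of the earlier profile-count observation, recorded for later use in bounding the number of table entries of the dynamic-programming algorithm.
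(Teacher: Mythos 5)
Your proposal is correct and follows exactly the paper's reasoning: the preceding sentence in the paper derives the observation by combining $\ctypes_t = 2^{\profiles_t}$ with the bound $\card{\profiles_t} \le 2^{\calO(w)}$ from Observation~\ref{obs:ccol:number:of:profiles}, just as you do. Nothing is missing.
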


In our algorithm we want to be able to determine the $t$-type of the union of a color class
in $G_r$ and a color class in $G_s$.
This is done via the following notion of a merge type,
which is based on the notion of merge and lift profiles given in the previous section.
\begin{definition}[Merge Type]
	Let $G$ be a graph with rooted branch decomposition $(T, \decf)$, 
	let $t \in V(T) \setminus \leaves(T)$ with children $r$ and $s$.
	For a pair of an $r$-type $\ctype_r \in \ctypes_r$ and an $s$-type $\ctype_s \in \ctypes_s$,
	the \emph{merge type} of $\ctype_r$ and $\ctype_s$, denoted by $\mergetype(\ctype_r, \ctype_s)$, 
	is the $t$-type obtained as follows.
	\begin{align*}
		\mergetype(\ctype_r, \ctype_s) \defeq\;&\left\lbrace\mergeprofile(\pr_r, \pr_s) \mid \pr_r \in \ctype_r,~\pr_s \in \ctype_s, 
			\mbox{where } \pr_r \mbox{ and } \pr_s \mbox{ are compatible}\right\rbrace \\
		  &\bigcup\nolimits_{p \in \{r, s\}} \left\lbrace\liftprofile(\pr_p) \mid \pr_p \in \ctype_p, 
			\mbox{where $\pr_p$ is liftable}\right\rbrace
	\end{align*}
\end{definition}

\begin{lemma}\label{lem:ccol:merge:type}
	Let $G$ be a graph with rooted branch decomposition $(T, \decf)$, 
	let $t \in V(T) \setminus \leaves(T)$ with children $r$ and $s$.
	Let $C_r \subseteq V_r$ and $C_s \subseteq V_s$.
	Then, $\ctype_t(C_r \cup C_s) = \mergetype(\ctype_r(C_r), \ctype_s(C_s))$.
\end{lemma}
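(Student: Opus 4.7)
The plan is to prove the set equality $\ctype_t(C_r \cup C_s) = \mergetype(\ctype_r(C_r), \ctype_s(C_s))$ by establishing both inclusions, and in each direction to split into the two cases contemplated by the definition of $\mergetype$: merging two nonempty cliques from $G_r$ and $G_s$ via compatible profiles, versus lifting a clique that lives entirely in one child.

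For the inclusion $\supseteq$, I would take an arbitrary $\pr_t \in \mergetype(\ctype_r(C_r), \ctype_s(C_s))$ and unfold the definition. If $\pr_t = \mergeprofile(\pr_r, \pr_s)$ for compatible $\pr_r \in \ctype_r(C_r)$ and $\pr_s \in \ctype_s(C_s)$, then by definition of the $r$- and $s$-types there exist cliques $X_r, X_s$ potentially bad for $C_r, C_s$ with the respective profiles, and Lemma~\ref{lem:ccol:merge:profile:bt} applies directly to yield that $X_r \cup X_s$ is potentially bad for $C_r \cup C_s$ with profile exactly $\pr_t$. If instead $\pr_t = \liftprofile(\pr_p)$ for some liftable $\pr_p \in \ctype_p(C_p)$, then I would witness a clique $X_p$ potentially bad for $C_p$ with profile $\pr_p$ and invoke Lemma~\ref{lem:ccol:lift:profile} to conclude that $X_p$ remains potentially bad for $C_r \cup C_s$ in $G_t$ with profile $\liftprofile(\pr_p) = \pr_t$, hence $\pr_t \in \ctype_t(C_r \cup C_s)$.

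For the reverse inclusion $\subseteq$, I would fix $\pr_t \in \ctype_t(C_r \cup C_s)$ and pick a witness clique $X_t$ in $G_t$ that is potentially bad for $C_t \defeq C_r \cup C_s$ with $\profile(X_t \mid C_t) = \pr_t$. Set $X_p \defeq X_t \cap V_p$ and $C_p \defeq C_t \cap V_p = C_p$ for $p \in \{r, s\}$. The natural split is whether both $X_r$ and $X_s$ are nonempty or not. In the former case, Lemma~\ref{lem:ccol:merge:profile:tb} tells me that each $X_p$ is potentially bad for $C_p$, so $\profile(X_p \mid C_p) \in \ctype_p(C_p)$, and that these two profiles are compatible; Lemma~\ref{lem:ccol:merge:profile:bt} then identifies $\pr_t = \profile(X_t \mid C_t)$ with $\mergeprofile(\profile(X_r \mid C_r), \profile(X_s \mid C_s))$, placing $\pr_t$ in the first part of $\mergetype$. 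Otherwise $X_t \subseteq V_p$ for exactly one $p$ (say $p = r$), and the ``only if'' direction of Lemma~\ref{lem:ccol:lift:profile} gives both that $X_r$ is potentially bad for $C_r$ in $G_r$ (so its profile lies in $\ctype_r(C_r)$) and that this profile is liftable with $\liftprofile$ equal to $\pr_t$, placing $\pr_t$ in the lift part of $\mergetype$.

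The main obstacle I anticipate is almost entirely bookkeeping: making sure $C_r = C_t \cap V_r$ and $C_s = C_t \cap V_s$ partition $C_t$ correctly (which follows since $V_r$ and $V_s$ partition $V_t$) so that potential badness for $C_r$ and $C_s$ in the children really does correspond to potential badness for $C_t$ in $G_t$, and tracking the case where $X_t$ lies entirely in one child so that the ``empty side'' is handled via Lemma~\ref{lem:ccol:lift:profile} rather than Lemma~\ref{lem:ccol:merge:profile:tb}. No new combinatorial content is needed beyond the three structural lemmas already proved; the proof is essentially a case analysis that glues them together.
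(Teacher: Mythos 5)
Your proposal is correct and follows essentially the same route as the paper: both inclusions are handled by the same case split (both restrictions nonempty versus one side empty), invoking Lemma~\ref{lem:ccol:merge:profile:tb} together with Lemma~\ref{lem:ccol:merge:profile:bt} for the merge case and Lemma~\ref{lem:ccol:lift:profile} for the lift case. The only difference is cosmetic (you cite Lemma~\ref{lem:ccol:merge:profile:bt} explicitly to identify $\pr_t$ with the merge profile where the paper says ``by construction''), so there is nothing to add.
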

\begin{proof}
	Let $C_t \defeq C_r \cup C_s$.
	For one inclusion, let $\pr_t \in \ctype_t(C_t)$. 
	Then, there is a clique $X_t$ in $G_t$ that is potentially bad for $C_t$ whose $C_t$-profile is $\pr_t$.
	If for all $p \in \{r, s\}$, $X_p \defeq X_t \cap V_p \neq \emptyset$, 
	then by Lemma~\ref{lem:ccol:merge:profile:tb},
	we know that for all $p \in \{r, s\}$, $X_p$ is a potentially bad clique for $C_p \defeq C_t \cap V_p$,
	therefore $\pr_p \defeq \profile(X_p \mid C_p) \in \ctype_p(C_p)$.
	Moreover, the lemma asserts that $\pr_r$ and $\pr_s$ are compatible,
	so by construction, we can conclude that $\pr_t = \mergeprofile(\pr_r, \pr_s) \in \mergetype(\ctype_r(C_r), \ctype_s(C_s))$.
	On the other hand, if for some $p \in \{r, s\}$, $X_t \subseteq V_p$, 
	then by Lemma~\ref{lem:ccol:lift:profile}, $X_t$ is a potentially bad clique for $C_p$,
	so $\pr_p \defeq \profile_p(X_t \mid C_p) \in \ctype(C_p)$.
	The lemma also asserts that $\pr_p$ is liftable and that $\liftprofile(\pr_r) = \pr_t$, 
	in which case we also have that $\pr_t \in \mergetype(\ctype_r(C_r), \ctype_s(C_s))$.
	We have argued that $\ctype_t(C_t) \subseteq \mergetype(\ctype_r(C_r), \ctype_s(C_s))$.
	
	For the other inclusion, suppose that $\pr_t \in \mergetype(\ctype_r(C_r), \ctype_s(C_s))$.
	Then, either there is a pair of profiles $\pr_r \in \ctype_r(C_r)$, $\pr_s \in \ctype_s(C_s)$
	such that $\pr_r$ and $\pr_s$ are compatible and $\pr_t = \mergeprofile(\pr_r, \pr_s)$
	or for some $p \in \{r, s\}$, there is a profile $\pr_p \in \ctype_p(C_p)$ that is liftable
	and $\pr_t = \liftprofile(\pr_p)$.
	In the former case, we can use Lemma~\ref{lem:ccol:merge:profile:bt}
	to conclude that $\pr_t \in \ctype_t(C_t)$, 
	and in the latter case, we have that $\pr_t \in \ctype_t(C_t)$ by Lemma~\ref{lem:ccol:lift:profile}.
	This shows that $\mergetype(\ctype_r(C_r), \ctype_s(C_s)) \subseteq \ctype_t(C_t)$ which concludes the proof.
\end{proof}

\subsection{The algorithm}
We are now ready to describe the algorithm. 
As alluded to above, partial solutions at a node $t$, i.e.\ colorings of $G_t$, 
are described via the notion of a \emph{$t$-signature} which records the number of color classes of each type
in a coloring.
If two colorings have the same $t$-signature, then they are interchangeable as far as our algorithm is concerned.
We show that this information suffices to solve the problem in a bottom-up dynamic programming fashion.
\begin{definition}[$t$-Signature]
	Let $k$ be a positive integer.
	Let $G$ be a graph with rooted branch decomposition $(T, \decf)$, 
	let $t \in V(T)$,
	and let $\calC = (C_1, \ldots, C_k)$ be a $k$-coloring of $G_t$.
	Then, $\csig_{\calC} \colon \ctypes_t \to \{0, 1, \ldots, k\}$ where
	\[
		\forall \ctype_t \in \ctypes_t\colon \csig_{\calC}(\ctype_t) \defeq \card{\{i \in \{1, \ldots, k\} \mid \ctype(C_i) = \ctype_t\}}, 
	\]
	is called the \emph{$t$-signature} of $\calC$.
	The set of \emph{$t$-signatures} is defined as:
	\[
		\csignatures_t \defeq \left\{\csig_t \colon \ctypes_t \to \{0, 1, \ldots, k\} \;\middle|\; 
		\sum\nolimits_{\ctype_t \in \ctypes_t} \csig_t(\ctype_t) = k\right\}
	\]
\end{definition}

The following bound on the number of $t$-signatures immediately follows from Observation~\ref{obs:ccol:number:of:types},
stating that the number of $t$-types is upper bounded by $2^{2^{\calO(w)}}$.
\begin{observation}\label{obs:number:of:signatures}
	Let $(T, \decf)$ be a rooted branch decomposition of an $n$-vertex graph, and let $t \in V(T)$.
	There are at most $k^{2^{2^{\calO(w)}}} \le n^{2^{2^{\calO(w)}}}$ many $t$-signatures, 
	where $w \defeq \modulew(T, \decf)$ and $k$ is the number of colors.
\end{observation}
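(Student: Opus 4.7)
The plan is a direct counting argument that piggybacks on the already-established bound on the number of $t$-types. A $t$-signature is by definition a function $\csig_t \colon \ctypes_t \to \{0, 1, \ldots, k\}$ satisfying the single constraint $\sum_{\ctype_t} \csig_t(\ctype_t) = k$. Dropping the constraint can only increase the count, so the number of $t$-signatures is at most the number of functions from $\ctypes_t$ into a set of size $k+1$, which is $(k+1)^{\card{\ctypes_t}}$.

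Next, I would substitute the bound from Observation~\ref{obs:ccol:number:of:types}, namely $\card{\ctypes_t} \le 2^{2^{\calO(w)}}$, to obtain $(k+1)^{2^{2^{\calO(w)}}}$. The $+1$ is absorbed into the $\calO(\cdot)$ in the exponent (or, equivalently, one notes $(k+1) \le k^2$ for $k \ge 2$ and the factor of $2$ is absorbed in $\calO$ at the top of the tower), yielding the claimed $k^{2^{2^{\calO(w)}}}$. For the second inequality in the observation, I would simply remark that in the context of our algorithm we may assume $k \le n$: any clique coloring uses at most $n$ nontrivial color classes, so considering $k > n$ colors is never necessary for the decision problem and we may cap $k$ at $n$.

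There is no real obstacle here; the observation is a quantitative restatement of Observation~\ref{obs:ccol:number:of:types} composed with the elementary counting bound on functions. The only thing to be careful about is the placement of constants in the double-exponential tower, so that the absorption $(k+1)^{2^{2^{\calO(w)}}} = k^{2^{2^{\calO(w)}}}$ and the subsequent replacement of $k$ by $n$ genuinely fit inside a single $\calO(w)$ in the innermost exponent, which they do because $w \ge 1$ for any nontrivial decomposition.
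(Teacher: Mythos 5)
Your proposal is correct and matches the paper's (implicit) argument: the paper also treats the bound as an immediate consequence of Observation~\ref{obs:ccol:number:of:types}, counting functions from $\ctypes_t$ into $\{0,1,\ldots,k\}$ and absorbing constants into the $\calO(w)$ in the tower, with $k \le n$ assumed without loss of generality. No gaps; your remarks on the $+1$ absorption and on capping $k$ at $n$ are exactly the routine details being suppressed.
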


\begin{dptabledef}
	For each $t \in V(T)$ and $\csig_t \in \csignatures_t$, 
	we let $\dptable[t, \csig_t] = 1$ if and only if 
	there is a $k$-coloring $\calC$ of $G_t$ such that $\csig_{\calC} = \csig_t$.
\end{dptabledef}

We now show that the information stored at the table entries suffices to determine whether or not
our input is a \yes-instance; 
that is, after filling all the table entries, we can read off the solution to the problem at the root node.

\begin{lemma}\label{lem:ccol:root}
	Let $G$ be a graph with rooted branch decomposition $(T, \decf)$, and let $\rootnode$ be the root of $T$.
	$G$ has a clique coloring with $k$ colors if and only if $\dptable[\rootnode, \csig^\star] = 1$,
	where $\csig^\star$ is the $\rootnode$-signature for which $\csig^\star(\emptyset) = k$.
\end{lemma}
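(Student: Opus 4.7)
The plan is to unwind the definitions at the root $\rootnode$ and observe that there everything collapses to the original notion of ``maximal clique''; the signature $\csig^\star$ then exactly encodes that every color class avoids such cliques.

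First I would observe that since $\overline{V_{\rootnode}} = \emptyset$, for any two $u, v \in V_{\rootnode} = V(G)$ we have $N_G(u) \cap \overline{V_{\rootnode}} = \emptyset = N_G(v) \cap \overline{V_{\rootnode}}$, so $u \sim_{\rootnode} v$. Hence $V_{\rootnode}/{\sim_{\rootnode}} = \{V(G)\}$ consists of a single equivalence class. Consequently, for any nonempty clique $X$ in $G_{\rootnode} = G$ we have $\eqc_{\rootnode}(X) = \{V(G)\}$ and so $G_{\rootnode}[V(\eqc_{\rootnode}(X))] = G$. Therefore $X$ is \eqcmax in $G_{\rootnode}$ if and only if $X$ is a maximal clique of $G$.

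Next I would deduce what the $\rootnode$-type of a set $C \subseteq V(G)$ looks like. The only possible profile in $\profiles_{\rootnode}$ is $(\{V(G)\}, \emptyset)$, since $\insiders$ must be a nonempty subset of $V_{\rootnode}/{\sim_{\rootnode}} = \{V(G)\}$ disjoint from $\outsiders$. Combined with the previous step, $\ctype_{\rootnode}(C) \ne \emptyset$ if and only if there exists a maximal clique $X$ of $G$ with $X \subseteq C$. Equivalently, $\ctype_{\rootnode}(C) = \emptyset$ if and only if $C$ contains no maximal clique of $G$.

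Finally I would string the equivalences together. A $k$-coloring $\calC = (C_1, \ldots, C_k)$ satisfies $\csig_{\calC} = \csig^\star$ if and only if $\ctype_{\rootnode}(C_i) = \emptyset$ for every $i \in \{1, \ldots, k\}$, which by the previous step is equivalent to no $C_i$ containing a maximal clique of $G$, i.e.\ $\calC$ being a clique coloring of $G$ with $k$ colors. By the definition of $\dptable$, $\dptable[\rootnode, \csig^\star] = 1$ iff such a coloring exists, which proves both directions of the lemma.

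I do not expect any real obstacle; the whole argument is a verification that at the root the machinery of profiles and types degenerates to the original ``monochromatic maximal clique'' condition, so the only step that requires any care is the bookkeeping that $V_{\rootnode}/{\sim_{\rootnode}}$ really is a single class and that maximality in $G[V(\eqc_{\rootnode}(X))]$ coincides with maximality in $G$.
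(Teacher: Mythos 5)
Your proof is correct and follows essentially the same route as the paper's: it combines the observation that $\csig^\star$ forces every color class to have the empty $\rootnode$-type with the fact that, since $\sim_{\rootnode}$ has the single class $V(G)$, the potentially bad cliques for $C$ at the root are exactly the maximal cliques of $G$ contained in $C$. Your write-up just spells out the details the paper leaves implicit.
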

\begin{proof}
	The lemma immediately follows from two facts. 
	First, since $\csig^\star(\emptyset) = k$, we have that $\csig^\star(\ctype_\rootnode) = 0$ 
	for any other $\rootnode$-type $\ctype_\rootnode \neq \emptyset$.
	Second, that for each set $C \subseteq V_\rootnode = V(G)$,
	the set of potentially bad cliques for $C$ is precisely the set of maximal cliques that are fully contained in $C$,
	i.e.\ it is the set of monochromatic maximal cliques in the corresponding coloring that are contained in $C$.
\end{proof}

We first describe how to compute the table entries at the leaves, by brute-force.
\begin{dpleaves}
	Let $t \in \leaves(T)$ be a leaf node in $T$ and let $v \in V(G)$ 
	be the vertex such that $\decf(v) = t$.
	We show how to compute the table entries $\dptable[t, \cdot]$.
	Note that $G_t = (\{v\}, \emptyset)$, and that $\{v\}$ is the only equivalence class of $\sim_t$.
	To describe the types of color classes of $G_t$, observe that
	the only \eqcmax clique in $G_t$ is $\{v\} \eqdef X_v$, 
	which is potentially bad for $C_v \defeq \{v\} = X_v$.
	In that case, we have that $\pr_v \defeq \profile(X_v \mid C_v) = (\{v\}, \emptyset)$,
	and the type of color class $C_v$ is $\{\pr_v\}$.
	The type of the remaining $k-1$ color classes is $\emptyset$, since they are all empty.
	Therefore, for each $t$-signature $\csig_t$, we set $\dptable[t, \csig_t] \defeq 1$
	if and only if $\csig_t(\{\pr_v\}) = 1$ and  $\csig_t(\emptyset) = k - 1$.
\end{dpleaves}

Next, we move on to the computation of the table entries at internal nodes of the branch decomposition.
To describe this part of the algorithm,
we borrow the following notion of a \emph{merge skeleton} from~\cite{JLL20}.\footnote{Note 
that in~\cite{JLL20}, the graph structure of the bipartite graph plays a role, in that there is only edges between compatible types.
In the present setting, there is no notion of compatibility of color class types which is why the bipartite graph
of the merge skeleton is always complete.}
\begin{definition}[Merge skeleton]
	Let $G$ be a graph and $(T, \decf)$ one of its rooted branch decompositions.
	Let $t \in V(T) \setminus \leaves(T)$ with children $r$ and $s$.
	The \emph{merge skeleton} of $r$ and $s$ is an edge-labeled complete bipartite graph
	$(\mergeaux, \malab)$ where
	\begin{itemize}
		\item $V(\mergeaux) = \ctypes_r \cup \ctypes_s$, and
		\item for all $\ctype_r \in \ctypes_r$, $\ctype_s \in \ctypes_s$, 
			$\malab(\ctype_r \ctype_s) = \mergetype(\ctype_r, \ctype_s)$.
	\end{itemize}
\end{definition}
\begin{algorithm}
	\lForEach{$\csig_t \in \csignatures_t$}{set $\dptable[t, \csig_t] \gets 0$}
	Let $(\mergeaux, \malab)$ be the merge skeleton of $t$\;
	\ForEach{$\csig_r \in \csignatures_r$, $\csig_s \in \csignatures_s$ such that $\dptable[r, \csig_r] = 1$ and $\dptable[s, \csig_s] = 1$}{
		\ForEach{\label{alg:cliquecol:internal:l4}$\maassign \colon E(\mergeaux) \to \{0, 1, \ldots, k\}$ such that 
			\begin{enumerate}
				\item $\sum_{e \in E(\mergeaux)} \maassign(e) = k$, and
				\item for all $p \in \{r, s\}$ and all $\ctype_p \in \ctypes_p$, it holds that
					$\sum_{\ctype_p \ctype_o \in E(\mergeaux)} \maassign(\ctype_p \ctype_o) = \csig_p(\ctype_p)$
			\end{enumerate}
		}{
			Let $\csig_t \colon \ctypes_t \to \{0, 1, \ldots, k\}$ 
			be such that for all $\ctype_t \in \ctypes_t$,
			$\csig_t(\ctype_t) = \sum_{e \in E(\mergeaux), \malab(e) = \ctype_t} \maassign(e)$\label{alg:cliquecol:internal:l6}\;
			update $\dptable[t, \csig_t] \gets 1$\;
		}
	}
	\caption{Algorithm to set the table entries at an internal node $t \in V(T) \setminus \leaves(T)$ with children
		$r$ and $s$, assuming the table entries at $r$ and $s$ have been computed.}
	\label{alg:cliquecol:internal}
\end{algorithm}
\begin{dpinternal}
	Let $t \in V(T) \setminus \leaves(T)$ be an internal node with children $r$ and $s$.
	We discuss how to compute the table entries at $t$, assuming the table entries at $r$ and $s$ have been computed.
	Each coloring of $G_t$ can be obtained from a coloring of $G_r$ and a coloring of $G_s$,
	by merging pairs of color classes.
	Therefore, for each pair $\csig_r \in \csignatures_r$, $\csig_s \in \csignatures_s$ such that 
	$\dptable[r, \csig_r] = 1$ and $\dptable[s, \csig_s] = 1$, we do the following.
	We enumerate all labelings of the edge set of the merge skeleton with numbers from $\{0, 1, \ldots, k\}$,
	with the following interpretation.
	If an edge $\ctype_r \ctype_s$ has label $j$, then it means that $j$ color classes of $r$-type $\ctype_r$
	will be merged with $j$ color classes of $s$-type $\ctype_s$; 
	this gives $j$ color classes of $t$-type $\mergetype(\ctype_r, \ctype_s) = \malab(\ctype_r \ctype_s)$.
	Each such labeling that respects the number of color classes available of each type will produce a 
	coloring of $G_t$ with some signature $\csig_t$, which can then be read off the edge labeling.
	For all such $\csig_t$, we set $\dptable[t, \csig_t] = 1$.
	We give the formal details in Algorithm~\ref{alg:cliquecol:internal}.
\end{dpinternal}

We now prove correctness of the algorithm.
\begin{lemma}\label{lem:ccol:cor}
	Let $G$ be a graph and $(T, \decf)$ one of its rooted branch decompositions,
	and let $t \in V(T)$.
	The above algorithm computes the table entries $\dptable[t, \cdot]$ correctly,
	i.e.\ for each $\csig_t \in \csignatures_t$, it sets $\dptable[t, \csig_t] = 1$
	if and only if $G_t$ has a $k$-coloring $\calC$ with $\csig_\calC = \csig_t$.
\end{lemma}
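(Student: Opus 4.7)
The plan is to proceed by induction on the height of $t$ in the rooted branch decomposition $(T, \decf)$, establishing both directions of the correctness claim at each level. The key algebraic fact we will lean on throughout is Lemma~\ref{lem:ccol:merge:type}, which tells us that the $t$-type of a color class arising from merging $C_r \subseteq V_r$ with $C_s \subseteq V_s$ depends only on $\ctype_r(C_r)$ and $\ctype_s(C_s)$ via the merge skeleton labels.

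For the base case, I would verify directly that at a leaf $t$ with $\decf(v) = t$, the graph $G_t = (\{v\}, \emptyset)$ admits exactly one $k$-coloring up to the identity of color classes: one singleton color class $\{v\}$ and $k-1$ empty ones. A straightforward unwinding of the definitions gives $\ctype_t(\{v\}) = \{\pr_v\}$ with $\pr_v = (\{v\}, \emptyset)$ and $\ctype_t(\emptyset) = \emptyset$, so the unique signature is the one set by the algorithm, matching the description in the leaves paragraph.

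For the inductive step at an internal node $t$ with children $r$ and $s$, assume correctness at $r$ and $s$. For the forward direction, suppose $\dptable[t, \csig_t] = 1$. By Algorithm~\ref{alg:cliquecol:internal}, this is witnessed by signatures $\csig_r, \csig_s$ with $\dptable[r, \csig_r] = \dptable[s, \csig_s] = 1$ and an edge-labeling $\maassign$ on the merge skeleton satisfying the two constraints. By induction, choose $k$-colorings $\calC_r = (C^r_1, \ldots, C^r_k)$ of $G_r$ and $\calC_s = (C^s_1, \ldots, C^s_k)$ of $G_s$ realizing $\csig_r$ and $\csig_s$. Using the degree conditions in the labeling, I would produce a bijection between indices $\{1, \ldots, k\}$ and the multiset of edges of $\mergeaux$ (where each edge $\ctype_r \ctype_s$ appears $\maassign(\ctype_r \ctype_s)$ times) such that the index assigned to edge $\ctype_r \ctype_s$ gets paired with an $r$-class of type $\ctype_r$ and an $s$-class of type $\ctype_s$. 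Unioning paired color classes yields a $k$-coloring $\calC_t$ of $G_t$, and by Lemma~\ref{lem:ccol:merge:type} each merged class has $t$-type equal to the corresponding edge label, so $\csig_{\calC_t}$ matches the $\csig_t$ produced at Line~\ref{alg:cliquecol:internal:l6}.

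The backward direction is the symmetric argument, and this is where the main care is required. Given a $k$-coloring $\calC_t = (C^t_1, \ldots, C^t_k)$ of $G_t$ with $\csig_{\calC_t} = \csig_t$, set $C^r_i \defeq C^t_i \cap V_r$ and $C^s_i \defeq C^t_i \cap V_s$, so that $\calC_r, \calC_s$ are $k$-colorings of $G_r, G_s$. By induction, their signatures $\csig_r, \csig_s$ have $\dptable[r, \csig_r] = \dptable[s, \csig_s] = 1$. Define $\maassign$ by letting $\maassign(\ctype_r \ctype_s)$ be the number of indices $i$ with $\ctype_r(C^r_i) = \ctype_r$ and $\ctype_s(C^s_i) = \ctype_s$; this is well defined and automatically satisfies the sum-to-$k$ and marginal constraints of Line~\ref{alg:cliquecol:internal:l4}. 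Lemma~\ref{lem:ccol:merge:type} then guarantees $\ctype_t(C^t_i) = \mergetype(\ctype_r(C^r_i), \ctype_s(C^s_i))$ for every $i$, so the signature computed from $\maassign$ in Line~\ref{alg:cliquecol:internal:l6} coincides with $\csig_{\calC_t} = \csig_t$, and the algorithm sets $\dptable[t, \csig_t] = 1$. The main obstacle is purely bookkeeping: one must carefully track that splitting a $G_t$-coloring along $V_r, V_s$ and then recombining via the merge skeleton is an inverse pair of operations at the level of signatures, and Lemma~\ref{lem:ccol:merge:type} does exactly the heavy lifting required.
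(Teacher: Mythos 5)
Your proposal is correct and follows essentially the same route as the paper's proof: induction on the height of $t$, a direct check at the leaves, and in both directions at internal nodes the pairing of color classes according to the merge-skeleton edge labeling, with Lemma~\ref{lem:ccol:merge:type} supplying the type of each merged class. The only cosmetic difference is that you phrase the forward direction via an explicit bijection between indices and the multiset of edges, whereas the paper describes the same pairing procedurally; the substance is identical.
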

\begin{proof}
	The proof is by induction on the height of $t$.
	In the base case, when $t$ is a leaf, 
	it is straightforward to verify correctness.
	
	Now suppose that $t \in V(T) \setminus \leaves(T)$ is an internal node with children $r$ and $s$, 
	and let $(\mergeaux, \malab)$ be the merge skeleton at $t$.
	Suppose for some $t$-signature $\csig_t \in \csignatures_t$, the algorithm set $\dptable[t, \csig_t] = 1$.
	Then, there is some $r$-signature $\csig_r$ and some $s$-signature $\csig_s$ 
	such that $\dptable[r, \csig_r] = 1$, $\dptable[s, \csig_s] = 1$, and 
	there is a map $\maassign\colon E(\mergeaux) \to \{0, 1, \ldots, k\}$
	satisfying the conditions of lines~\ref{alg:cliquecol:internal:l4} and~\ref{alg:cliquecol:internal:l6} 
	in Algorithm~\ref{alg:cliquecol:internal}.
	By induction, there is a $k$-coloring $\calC_r$ of $G_r$ whose $r$-signature is $\csig_r$,
	and a $k$-coloring of $\calC_s$ of $G_s$ whose $s$-signature is $\csig_s$.
	We construct the desired coloring $\calC_t$ of $G_t$ whose $t$-signature is $\csig_t$ as follows:
	For each pair of an $r$-type $\ctype_r$ and an $s$-type $\ctype_s$, 
	we take $\maassign(\ctype_r \ctype_s)$ pairs of a color class $C_r$ of $r$-type $\csig_r$
	and a color class $C_s$ of $s$-type $C_s$, and for each such pair, we add $C_r \cup C_s$ 
	as a color class to $\calC_t$.
	By Lemma~\ref{lem:ccol:merge:type}, the $t$-type of $C_r \cup C_s$ is 
	$\mergetype(\ctype_r, \ctype_s) = \malab(\ctype_r \ctype_s)$.
	The condition in line~\ref{alg:cliquecol:internal:l4} ensures that each color class of 
	$\calC_r$ and each color class of $\calC_s$ is used precisely once to create a color class of $\calC_t$,
	and the condition in line~\ref{alg:cliquecol:internal:l6} ensures that 
	the $t$-signature of $\calC_t$ is indeed $\csig_t$.
	
	For the other direction, suppose that there is a $k$-coloring $\calC_t$ of $G_t$ with $t$-signature $\csig_t$.
	We construct a pair of a coloring $\calC_r$ of $G_r$ and a coloring of $\calC_s$ of $G_s$,
	together with their signatures $\csig_r$ and $\csig_s$, respectively,
	and a map $\maassign\colon E(\mergeaux) \to \{0, 1, \ldots, k\}$.
	Initially, for all $p \in \{r, s\}$, we let $\calC_p = \emptyset$, 
	and for all $\ctype_p \in \ctypes_p$, $\csig_p(\ctype_p) \defeq 0$.
	Moreover, we let $\maassign(e) \defeq 0$ for all $e \in E(\mergeaux)$.
	
	For each color class $C_t \in \calC_t$, 
	we add $C_r \defeq C_t \cap V_r$ to $\calC_r$ and $C_s \defeq C_t \cap V_s$ to $\calC_s$.
	Let $\ctype_t$ be the $t$-type of $C_t$.
	By Lemma~\ref{lem:ccol:merge:type}, $C_r$ has some $r$-type $\ctype_r$ and $C_s$ has some $s$-type $\ctype_s$
	such that $\ctype_t$ is the merge type $\mergetype(\ctype_r, \ctype_s)$ of $\ctype_r$ and $\ctype_s$.
	We increase the values of $\csig_r(\ctype_r)$ and $\csig_s(\ctype_s)$ by $1$, 
	since we added one more color class of $r$-type $\ctype_r$ to $\calC_r$, and 
	one more color class of $s$-type $\ctype_s$ to $\calC_s$.
	Additionally, we add $1$ to the value of $\maassign(\ctype_r \ctype_s)$,
	since $C_t$ is a color class of $t$-type $\mergetype(\ctype_r, \ctype_s) = \malab(\ctype_r \ctype_s)$
	obtained from merging $C_r$ (a color class of $r$-type $\ctype_r$)
	with $C_s$ (a color class of $s$-type $\ctype_s$).
	
	After doing this for all color classes of $\calC_t$, 
	we have that $\calC_r$ is a $k$-coloring with $r$-signature $\csig_r$, 
	and that $\calC_s$ is a $k$-coloring with $s$-signature $\csig_s$.
	By induction, $\dptable[r, \csig_r] = 1$ and $\dptable[s, \csig_s] = 1$.
	It remains to argue that $\maassign$ satisfies the conditions 
	expressed in lines~\ref{alg:cliquecol:internal:l4} and~\ref{alg:cliquecol:internal:l6} 
	in Algorithm~\ref{alg:cliquecol:internal}.
	The first item of line~\ref{alg:cliquecol:internal:l4} is clearly satisfied,
	since we increased $\card{\calC_t} = k$ values of $\maassign$ by $1$ in the above process.
	The second item holds since we increased the value of some $\csig_p(\ctype_p)$ by $1$
	if and only if we increased the value of an edge $e$ incident with $\ctype_p$ in $\mergeaux$ by $1$.
	To see that for each $\ctype_t$, 
	$\csig_t(\ctype_t) = \sum_{e \in E(\mergeaux), \malab(e) = \ctype_t} \maassign(e)$,
	observe that we identified for each color class of type $\ctype_t$,
	the occurrence of $\ctype_t$ as a merge type of a pair of an $r$-type and an $s$-type,
	and therefore a label of some edge $e \in E(\mergeaux)$, 
	and increased $\maassign(e)$ by $1$ in such a case. 
	We can conclude that $\csig_t$ can be obtained as shown in line~\ref{alg:cliquecol:internal:l6} 
	of Algorithm~\ref{alg:cliquecol:internal}, and so the algorithm set
	$\dptable[t, \csig_t] = 1$.
\end{proof}

\begingroup
\newcommand\eff{2^{2^{\calO(w)}}}
To wrap up, it remains to argue the runtime of the algorithm.
Suppose we are given a graph $G$ with rooted branch decomposition $(T, \decf)$ 
and let $w \defeq \modulew(T, \decf)$.
By Observation~\ref{obs:number:of:signatures}, there are at most $n^{\eff}$ table entries at each node of $T$.
The entries of leaf nodes can clearly be computed in time $n^{\calO(1)}$.
Now let $t \in V(T) \setminus \leaves(T)$ be an internal node with children $r$ and $s$.
To compute all table entries at $t$, we execute Algorithm~\ref{alg:cliquecol:internal}. 
In the worst case, it loops over each pair of an $r$-signature and an $s$-signature,
and given such a pair, it enumerates all labelings of the edges of the merge skeleton $\mergeaux$ 
with numbers from $\{0, 1, \ldots, k\}$ (such that all entries sum up to $k$).
We have that $\card{E(\mergeaux)} = \card{\ctypes_r} \cdot \card{\ctypes_s} = \left(\eff\right)^2 = \eff$
(see Observation~\ref{obs:ccol:number:of:types}),
therefore the number of labelings to consider is upper bounded by $k^{\eff} \le n^{\eff}$.
The runtime of Algorithm~\ref{alg:cliquecol:internal} can therefore be upper bounded by
\begin{align*}
	\left(n^{\eff}\right)^2 \cdot n^{\eff} = n^{\eff},
\end{align*}
and since $\card{V(T)} = \calO(n)$, this equals the runtime of the whole procedure.
\endgroup
Correctness is proved in Lemma~\ref{lem:ccol:cor}, and 
Lemma~\ref{lem:ccol:root} asserts that the solution to the problem can be read off the table 
entries at the root, once computed. 
Using standard memoization techniques, we can modify the above algorithm so that it returns a coloring if one exists.
We therefore have the following theorem.
\begin{theorem}\label{thm:ccol:xp:cliquew}
	There is an algorithm that given a graph $G$ 
	together with one of its rooted branch decompositions $(T, \decf)$ and a positive integer $k$, 
	decides whether $G$ has a clique coloring with $k$ colors in time $n^{2^{2^{\calO(w)}}}$,
	where $w \defeq \modulew(T, \decf)$.
	If such a coloring exists, the algorithm can construct it.
\end{theorem}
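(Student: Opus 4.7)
The plan is to bundle together the machinery already developed in this section: run the bottom-up dynamic program over $(T,\decf)$ that fills in the table $\dptable[\cdot,\cdot]$ indexed by $t$-signatures, argue correctness via the lemmas above, and then analyze the runtime by multiplying the number of signatures per node by the cost of updating one node. First I would invoke Lemma~\ref{lem:ccol:root} to reduce the decision problem to checking a single entry $\dptable[\rootnode,\csig^\star]$ at the root. Correctness of the table fill-in at every node then follows directly from Lemma~\ref{lem:ccol:cor}, so no further inductive work is required.

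The bulk of the proof would therefore be the runtime accounting. I would start from Observation~\ref{obs:number:of:signatures}, which gives $n^{2^{2^{\calO(w)}}}$ as the bound on $\card{\csignatures_t}$ at any node $t$, and note that leaf entries are computed in polynomial time by the explicit leaf rule. For an internal node $t$ with children $r$ and $s$, I would bound the cost of Algorithm~\ref{alg:cliquecol:internal}: the outer double loop over $(\csig_r,\csig_s)$ contributes a factor of $\left(n^{2^{2^{\calO(w)}}}\right)^2$. By Observation~\ref{obs:ccol:number:of:types} the merge skeleton has $\card{\ctypes_r}\cdot\card{\ctypes_s} = 2^{2^{\calO(w)}}$ edges, so the number of edge-labelings $\maassign\colon E(\mergeaux)\to\{0,1,\ldots,k\}$ with total mass $k$ is at most $k^{2^{2^{\calO(w)}}}\le n^{2^{2^{\calO(w)}}}$. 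Multiplying and simplifying gives $n^{2^{2^{\calO(w)}}}$ time per node, and since $\card{V(T)}=\calO(n)$ the total runtime is of the claimed form.

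Finally, the construction part (producing an actual clique coloring when one exists) I would obtain by standard memoization: at each node, alongside the $0/1$ entry, store pointers to the child signatures $(\csig_r,\csig_s)$ and the labeling $\maassign$ that caused the entry to be set to $1$; then recursively unfold these witnesses from the root down, using the bijection supplied by the proof of Lemma~\ref{lem:ccol:cor} between witness triples $(\csig_r,\csig_s,\maassign)$ and actual $k$-colorings of $G_t$ with the required $t$-signature. This adds only polynomial overhead.

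The main obstacle in this proof is essentially cosmetic rather than mathematical, since all the conceptual heavy lifting has already been done in the preceding subsections (defining profiles, establishing Lemmas~\ref{lem:ccol:merge:profile:bt}, \ref{lem:ccol:merge:profile:tb}, \ref{lem:ccol:lift:profile}, and \ref{lem:ccol:merge:type}). The one place where I would be careful is to ensure that the compounding of the $2^{2^{\calO(w)}}$ factors in the exponent stays within a single $2^{2^{\calO(w)}}$ — squaring and multiplying by polynomially many factors inside the exponent all collapse into the same form, so the bound is preserved; I would state this simplification explicitly rather than leave it implicit.
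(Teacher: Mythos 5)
Your proposal is correct and follows essentially the same route as the paper: the paper's proof of this theorem is exactly the assembly you describe --- correctness via Lemma~\ref{lem:ccol:cor}, reading the answer at the root via Lemma~\ref{lem:ccol:root}, the runtime bound from Observation~\ref{obs:number:of:signatures} together with the count of edge-labelings of the merge skeleton in Algorithm~\ref{alg:cliquecol:internal}, multiplied over the $\calO(n)$ nodes of $T$, and memoization for constructing a coloring. Your explicit remark that squaring and combining the $2^{2^{\calO(w)}}$ factors collapses back into the same form matches the paper's computation verbatim.
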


\bibliographystyle{plain}
\bibliography{ref}

\begin{thebibliography}{10}

\bibitem{AndreaeEtAl1991}
Thomas Andreae, Martin Schughart, and Zsolt Tuza.
\newblock Clique-transversal sets of line graphs and complements of line
  graphs.
\newblock {\em Discrete Mathematics}, 88(1):11--20, 1991.

\bibitem{BacsoEtAl2004}
G{\'a}bor Bacs{\'o}, Sylvain Gravier, Andr{\'a}s Gy{\'a}rf{\'a}s, Myriam
  Preissmann, and Andr{\'a}s Sebo.
\newblock Coloring the maximal cliques of graphs.
\newblock {\em SIAM Journal on Discrete Mathematics}, 17(3):361--376, 2004.

\bibitem{BacsoTuza2009}
G{\'a}bor Bacs{\'o} and Zsolt Tuza.
\newblock Clique-transversal sets and weak 2-colorings in graphs of small
  maximum degree.
\newblock {\em Discrete Mathematics and Theoretical Computer Science},
  11(2):15--24, 2009.

\bibitem{BHKK07}
Andreas Bj{\"{o}}rklund, Thore Husfeldt, Petteri Kaski, and Mikko Koivisto.
\newblock Fourier meets {M}{\"{o}}bius: fast subset convolution.
\newblock In David~S. Johnson and Uriel Feige, editors, {\em Proceedings of the
  39th Annual {ACM} Symposium on Theory of Computing (STOC 2007)}, pages
  67--74, San Diego, California, USA, June 11-13, 2007. {ACM}.

\bibitem{BondyMurty}
J.~A. Bondy and U.~S.~R. Murty.
\newblock {\em Graph Theory}.
\newblock Springer, 2008.

\bibitem{CamposEtAl2008}
CN~Campos, Simone Dantas, and C{\'e}lia~Picinin de~Mello.
\newblock Colouring clique-hypergraphs of circulant graphs.
\newblock {\em Electronic Notes in Discrete Mathematics}, 30:189--194, 2008.

\bibitem{CerioliKorenchendler2009}
M{\'a}rcia~R Cerioli and Andr{\'e}~L Korenchendler.
\newblock Clique-coloring circular-arc graphs.
\newblock {\em Electronic Notes in Discrete Mathematics}, 35:287--292, 2009.

\bibitem{CharbitEtAl2016}
Pierre Charbit, Irena Penev, St{\'e}phan Thomass{\'e}, and Nicolas Trotignon.
\newblock Perfect graphs of arbitrarily large clique-chromatic number.
\newblock {\em Journal of Combinatorial Theory, Series B}, 116:456--464, 2016.

\bibitem{ChudnovskyLo2017}
Maria Chudnovsky and Irene Lo.
\newblock Decomposing and clique-coloring (diamond, odd-hole)-free graphs.
\newblock {\em Journal of Graph Theory}, 86(1):5--41, 2017.

\bibitem{CourcelleOlariu2000}
Bruno Courcelle and Stephan Olariu.
\newblock Upper bounds to the clique width of graphs.
\newblock {\em Discrete Applied Mathematics}, 101(1-3):77--114, 2000.

\bibitem{CyganEtAl2016}
Marek Cygan, Holger Dell, Daniel Lokshtanov, D{\'a}niel Marx, Jesper Nederlof,
  Yoshio Okamoto, Ramamohan Paturi, Saket Saurabh, and Magnus Wahlstr{\"o}m.
\newblock On problems as hard as {CNF-SAT}.
\newblock {\em ACM Transactions on Algorithms (TALG)}, 12(3):1--24, 2016.

\bibitem{CyganEtAl2015}
Marek Cygan, Fedor~V Fomin, {\L}ukasz Kowalik, Daniel Lokshtanov, D{\'a}niel
  Marx, Marcin Pilipczuk, Micha{\l} Pilipczuk, and Saket Saurabh.
\newblock {\em Parameterized Algorithms}.
\newblock Springer, 2015.

\bibitem{Defossez2006}
David D{\'e}fossez.
\newblock Clique-coloring some classes of odd-hole-free graphs.
\newblock {\em Journal of Graph Theory}, 53(3):233--249, 2006.

\bibitem{DowneyFellows2013}
Rodney~G. Downey and Michael~R. Fellows.
\newblock {\em Fundamentals of Parameterized Complexity}.
\newblock Springer, 2013.

\bibitem{DuffusEtAl1991}
Dwight Duffus, Bill Sands, Norbert Sauer, and Robert~E Woodrow.
\newblock Two-colouring all two-element maximal antichains.
\newblock {\em Journal of Combinatorial Theory, Series A}, 57(1):109--116,
  1991.

\bibitem{ImpagliazzoPaturi2001}
Russell Impagliazzo and Ramamohan Paturi.
\newblock On the complexity of $k$-{SAT}.
\newblock {\em Journal of Computer and System Sciences}, 62(2):367--375, 2001.

\bibitem{ImpagliazzoPaturiZane2001}
Russell Impagliazzo, Ramamohan Paturi, and Francis Zane.
\newblock Which problems have strongly exponential complexity?
\newblock {\em Journal of Computer and System Sciences}, 63(4):512--530, 2001.

\bibitem{JaffkeJansen2017}
Lars Jaffke and Bart M.~P. Jansen.
\newblock Fine-grained parameterized complexity analysis of graph coloring
  problems.
\newblock In Dimitris Fotakis, Aris Pagourtzis, and Vangelis~Th. Paschos,
  editors, {\em Proceedings of the 10th International Conference on Algorithms
  and Complexity, (CIAC 2017)}, volume 10236 of {\em Lecture Notes in Computer
  Science (LNCS)}, pages 345--356, Athens, Greece, May 24-26, 2017. Springer.

\bibitem{JLL20}
Lars Jaffke, Paloma~T. Lima, and Daniel Lokshtanov.
\newblock $b$-{C}oloring parameterized by clique-width, 2020.
\newblock \href{https://arxiv.org/abs/2003.04254}{arXiv:2003.04254}.

\bibitem{KleinMorgana2012}
Sulamita Klein and Aurora Morgana.
\newblock On clique-colouring of graphs with few {$P_4$’s}.
\newblock {\em Journal of the Brazilian Computer Society}, 18(2):113--119,
  2012.

\bibitem{Klo94}
Ton Kloks.
\newblock {\em Treewidth: {C}omputations and approximations}, volume 842 of
  {\em LNCS}.
\newblock Springer, 1994.

\bibitem{KT2002}
Jan Kratochv{\'i}l and Zsolt Tuza.
\newblock On the complexity of bicoloring clique hypergraphs of graphs.
\newblock {\em Journal of Algorithms}, 45(1):40--54, 2002.

\bibitem{Lovasz}
L.~Lov{\' a}sz.
\newblock {\em Combinatorial Problems and Exercises}.
\newblock North-Holland Publishing Co., 1993.

\bibitem{Marx2011}
D{\'a}niel Marx.
\newblock Complexity of clique coloring and related problems.
\newblock {\em Theoretical Computer Science}, 412(29):3487--3500, 2011.

\bibitem{MS1999}
Bojan Mohar and Riste Skrekovski.
\newblock The {G}r{\"o}tzsch theorem for the hypergraph of maximal cliques.
\newblock {\em Electronic Journal of Combinatorics}, 6(1):128, 1999.

\bibitem{Penev2016}
Irena Penev.
\newblock Perfect graphs with no balanced skew-partition are
  2-clique-colorable.
\newblock {\em Journal of Graph Theory}, 81(3):213--235, 2016.

\bibitem{Rao2006}
Micha{\"e}l Rao.
\newblock {\em D{\'e}compositions de graphes et algorithmes efficaces}.
\newblock PhD thesis, University of Metz, 2006.

\bibitem{Rao2008}
Micha{\"e}l Rao.
\newblock Clique-width of graphs defined by one-vertex extensions.
\newblock {\em Discrete Mathematics}, 308(24):6157--6165, 2008.

\bibitem{ShanEtAl2014}
Erfang Shan, Zuosong Liang, and Liying Kang.
\newblock Clique-transversal sets and clique-coloring in planar graphs.
\newblock {\em European Journal of Combinatorics}, 36:367--376, 2014.

\bibitem{Vas15}
Virginia Vassilevska~Williams.
\newblock Hardness of easy problems: Basing hardness on popular conjectures
  such as the strong exponential time hypothesis (invited talk).
\newblock In {\em Proceedings of the 10th International Symposium on
  Parameterized and Exact Computation (IPEC 2015)}, pages 17--29. Schloss
  Dagstuhl, 2015.

\end{thebibliography}

\end{document}